\documentclass[nofootinbib,prb,amsmath,amssymb,notitlepage,floatfix]{revtex4-1}

\usepackage{afterpage}

\usepackage{graphicx,amsfonts}
\usepackage{amsthm}
\usepackage{mathtools} 

\usepackage{esdiff}

\newcounter{subeq}
\renewcommand{\thesubeq}{\theequation\alph{subeq}}
\newcommand{\newsubeqblock}{\setcounter{subeq}{0}\refstepcounter{equation}}
\newcommand{\mysubeq}{\refstepcounter{subeq}\tag{\thesubeq}}

\usepackage{placeins}
\usepackage{bbm}
\usepackage{lmodern,bm}
\usepackage[utf8]{inputenc}
\usepackage{hyperref}
\usepackage{color}
\usepackage{fancyvrb}

\usepackage[normalem]{ulem}



\usepackage{subcaption}
\captionsetup{justification=raggedright,singlelinecheck=false,format=hang}

\newcommand{\bor}{\bm{r}}
\newcommand{\born}{\bm{r}_0}
\newcommand{\boa}{\bm{a}}
\newcommand{\bob}{\bm{b}}

\newcommand{\boex}{\bm{x}}
\newcommand{\boy}{\bm{y}}

\newcommand{\bosig}{{\boldsymbol\sigma}}
\newcommand{\opa}{\operatorname{A}}
\newcommand{\opb}{\operatorname{B}}
\newcommand{\opc}{\operatorname{C}}
\newcommand{\opi}{\operatorname{I}}

\newcommand{\opq}{\operatorname{Q}}
\newcommand{\opp}{\operatorname{P}}

\newcommand{\expr}[1]{\left\langle {#1}\right\rangle_\rho} 
\newcommand{\expnr}[1]{\left\langle {#1}\right\rangle} 

\usepackage{xparse}
\DeclareDocumentCommand{\sdev}{ O{} m }{
  {\Delta_{{#1}} {#2}}
}
\DeclareDocumentCommand{\var}{ O{} m }{
  {\Delta^2_{#1} {#2}}
}
\newcommand{\varr}[1]{{\var[\rho]{{#1}}}}
\newcommand{\sdevr}[1]{{\sdev[\rho]{{#1}}}}

\newcommand{\sdevmin}[1]{{\sdev{#1}_\text{min}}}
\newcommand{\varmin}[1]{{\var{#1}_\text{min}}}

\newcommand{\sdevmax}[1]{{\sdev{#1}_\text{max}}}
\newcommand{\varmax}[1]{{\var{#1}_\text{max}}}


\newcommand{\com}[2]{\left[#1,#2\right]} 
\newcommand{\comm}[2]{#1 #2 - #2 #1} 
\newcommand{\acom}[2]{#1 #2 + #2 #1} 

\newcommand{\si}{\mathcal{S}}
\newcommand{\hi}{\mathcal{H}}

\renewcommand{\max}{{\rm max}}
\renewcommand{\min}{{\rm min}}

\renewcommand{\Re}{\operatorname{Re}}
\renewcommand{\Im}{\operatorname{Im}}
\newcommand{\ival}{I}
\usepackage{tikz}
\usepackage{tikz-3dplot}
\usetikzlibrary{positioning,intersections,shadings}

\usepackage{pgfkeys}
\pgfkeys{
  /drawSemiCircle/.is family, /drawSemiCircle,
  defaults/.style = {scale = \textwidth,
    aAngle = 0,
    bAngle = 0,
    rAngle = 0,
    rLength = 1,
    drawBPrime = false,
    drawRightAngles = true,
    rightAngleScale = 0.07},
  scale/.estore in = \scale,
  aAngle/.estore in = \aAngle,
  bAngle/.estore in =\bAngle,
  rAngle/.estore in =\rAngle,
  rLength/.estore in =\rLength,
  drawBPrime/.estore in =\drawBPrime,
  rSub/.estore in =\rSub,
  drawRightAngles/.estore in =\drawRightAngles,
  rightAngleScale/.estore in =\rightAngleScale,
}

\usepackage{xstring}

\newcommand{\DrawSemiCircle[2]}{%
  \pgfkeys{/drawSemiCircle, defaults, #1}%
  \begin{tikzpicture}[scale=\scale/2cm,>=stealth] 
    \draw (-1,0) -- (1,0);

    \clip (-1cm, 0cm) rectangle (1.2cm, 1.2cm);

    \draw (0cm,0cm) circle (1cm);

    \draw [name=vect,thick,->] (0,0) -- node[pos=1,fill=none,label=\aAngle:$\boa$] {} (\aAngle:1cm);
    \draw [name=vect,thick,->] (0,0) -- node[pos=1,fill=none,label=\bAngle:$\bob$] {} (\bAngle:1cm);
    \IfEqCase{\drawBPrime}{
      {true}{
        \pgfmathsetmacro{\bPrimeAngle}{\bAngle-90};
        \draw [name=vect,thick,->] (0,0) -- node[pos=1, left  = 0.1 ,fill=none] {$\bob^\prime$} (\bPrimeAngle:1cm);
        \IfEqCase{\drawRightAngles}{
          {true}{
            \pgfmathsetmacro{\midAngle}{(\bPrimeAngle+\bAngle)/2}
            \pgfmathsetmacro{\midLength}{sqrt(2)*\rightAngleScale}
            \draw [name=rangle] (\bAngle:\rightAngleScale) -- (\midAngle:\midLength) -- (\bPrimeAngle:\rightAngleScale);
          }
          {false}{}
        }
      }
      {false}{}
    }
    \draw [name=vect,thick,->] (0,0) -- node[midway, fill=white] {$\bor_{\rSub}$} (\rAngle:\rLength);

    \pgfmathsetmacro{\ra}{1*\rLength*cos(\aAngle-\rAngle)}
    \pgfmathsetmacro{\rb}{1*\rLength*cos(\bAngle-\rAngle)}

    \draw [name=vect,thick,->] (0,0) -\- node[pos=1, below right = 0.06 and 0.03 ,fill=none] {$\boa^*$} (\aAngle:\ra);
    \draw [name=vect,thick,->] (\aAngle:\ra) -- node[midway,fill=white] {$\bm{x}$} (\rAngle:\rLength);
    \IfEq{\rb}{0}{
      \IfEqCase{\drawRightAngles}{
        {true}{
          \pgfmathsetmacro{\midAngle}{(\rAngle+\bAngle)/2}
          \pgfmathsetmacro{\midLength}{sqrt(2)*\rightAngleScale}
          \draw [name=rangle] (\rAngle:\rightAngleScale) -- (\midAngle:\midLength) -- (\bAngle:\rightAngleScale);
        }
        {false}{}
      }
    }
    {\draw [name=vect,thick,->] (0,0) -- node[pos=1, below left  = 0.05 and 0.03 ,fill=none] {$\bob^*$} (\bAngle:\rb);
      \draw [name=vect,thick,->] (\bAngle:\rb) -- node[midway,fill=white] {$\bm{y}$} (\rAngle:\rLength);}

    \IfEq{\ra}{0}{}{
      \IfEqCase{\drawRightAngles}{
        {true}{
          \pgfmathsetmacro{\rx}{\rLength*cos(\rAngle)}
          \pgfmathsetmacro{\ry}{\rLength*sin(\rAngle)}

          \pgfmathsetmacro{\rax}{\ra*cos(\aAngle)}
          \pgfmathsetmacro{\ray}{\ra*sin(\aAngle)}
          \pgfmathsetmacro{\xxUnNormed}{\rx - \rax}
          \pgfmathsetmacro{\xyUnNormed}{\ry - \ray}
          \pgfmathsetmacro{\xx}{\xxUnNormed / sqrt(\xxUnNormed*\xxUnNormed + \xyUnNormed*\xyUnNormed)}
          \pgfmathsetmacro{\xy}{\xyUnNormed / sqrt(\xxUnNormed*\xxUnNormed + \xyUnNormed*\xyUnNormed)}
          \pgfmathsetmacro{\aStartX}{\rax * (1 + \rightAngleScale/\ra)}
          \pgfmathsetmacro{\aStartY}{\ray * (1 + \rightAngleScale/\ra)}
          \pgfmathsetmacro{\aMidX}{\aStartX  + \rightAngleScale*\xx}
          \pgfmathsetmacro{\aMidY}{\aStartY  + \rightAngleScale*\xy}
          \pgfmathsetmacro{\aEndX}{\rax + \rightAngleScale*\xx}
          \pgfmathsetmacro{\aEndY}{\ray + \rightAngleScale*\xy}
          \draw [name=rangle] (\aStartX,\aStartY) --  (\aMidX,\aMidY) -- (\aEndX,\aEndY);
        }
        {false}{}
      }
    }
    \IfEq{\rb}{0}{}{
      \IfEqCase{\drawRightAngles}{
        {true}{
          \pgfmathsetmacro{\rbx}{\rb*cos(\bAngle)}
          \pgfmathsetmacro{\rby}{\rb*sin(\bAngle)}
          \pgfmathsetmacro{\yxUnNormed}{\rx - \rbx}
          \pgfmathsetmacro{\yyUnNormed}{\ry - \rby}
          \pgfmathsetmacro{\yx}{\yxUnNormed / sqrt(\yxUnNormed*\yxUnNormed + \yyUnNormed*\yyUnNormed)}
          \pgfmathsetmacro{\yy}{\yyUnNormed / sqrt(\yxUnNormed*\yxUnNormed + \yyUnNormed*\yyUnNormed)}
          \pgfmathsetmacro{\bStartX}{\rbx * (1 + \rightAngleScale/\rb)}
          \pgfmathsetmacro{\bStartY}{\rby * (1 + \rightAngleScale/\rb)}
          \pgfmathsetmacro{\bMidX}{\bStartX  + \rightAngleScale*\yx}
          \pgfmathsetmacro{\bMidY}{\bStartY  + \rightAngleScale*\yy}
          \pgfmathsetmacro{\bEndX}{\rbx + \rightAngleScale*\yx}
          \pgfmathsetmacro{\bEndY}{\rby + \rightAngleScale*\yy}
          \draw [name=rangle] (\bStartX,\bStartY) --  (\bMidX,\bMidY) -- (\bEndX,\bEndY);
        }
        {false}{}
      }
    }
  \end{tikzpicture}
}
\usepackage{xparse}
\DeclareDocumentCommand{\DrawSphere}{ m m m }{%
  \begin{tikzpicture}
    \tdplotsetmaincoords{10}{0}
    \tdplotsetrotatedcoords{90}{90}{-90}
    \coordinate (O) at (0,0,0);
    \pgfmathsetmacro{\scale}{0.1cm}

    \pgfmathsetmacro{\aAngle}{{#1}}
    \pgfmathsetmacro{\bAngle}{{#2}}
    \pgfmathsetmacro{\rZeroAngle}{{#3}}
    \pgfmathsetmacro{\rOneAngle}{\aAngle - abs(\aAngle - \rZeroAngle)}
    \pgfmathsetmacro{\rOneLength}{\rOneAngle < 0 ? sqrt( 1 + (cos(\aAngle - \bAngle) / cos(\bAngle - \aAngle + 90))^2  ) *  abs(cos(\rZeroAngle - \aAngle)) : 1}
    \pgfmathsetmacro{\rOneAngle}{\rOneAngle < 0 ? 0 : \rOneAngle}

    \pgfmathsetmacro{\bPrimeAngle}{\bAngle - 90}

    \tdplotsetcoord{a}{\scale}{90}{\aAngle}
    \tdplotsetcoord{b}{\scale}{90}{\bAngle}
    \tdplotsetcoord{bPrime}{\scale}{90}{\bPrimeAngle}
    \tdplotsetcoord{rZero}{\scale}{90}{\rZeroAngle}
    \tdplotsetcoord{rOne}{\rOneLength*\scale}{90}{\rOneAngle}

    \tdplotdrawarc[tdplot_screen_coords]{(O)}{\scale}{0}{360}{}{}

    \draw [thick,->] (O) -- node[pos=1,fill=none, label={\aAngle:$\boa$}] {} (a);
    \IfEq{\bAngle}{\rZeroAngle}{
      \draw [thick,->] (O) -- node[pos=1, fill=none, label={\bAngle:$\bob = \bor_0$}] {} (b);
    }{
      \draw [thick,->] (O) -- node[pos=1, fill=none, label={\bAngle:$\bob$}] {} (b);
      \draw [thick,->] (O) -- node[pos=1, fill=none, label={\rZeroAngle:$\bor_0$}] {} (rZero);
    }

    \IfEq{\bPrimeAngle}{\rOneAngle}{
      \IfEq{\rOneLength}{1}{
        \draw [thick,->] (O) -- node[pos=1, fill=none, label={\bPrimeAngle:$\bob^\prime = \bor_1$}] {} (bPrime);
      }{
        \draw [thick,->] (O) -- node[pos=1, fill=none, label={\bPrimeAngle:$\bob^\prime=\hat\bor_1$}] {} (bPrime);
        \draw [thick,->] (O) -- node[pos=1, fill=none, label={[label distance=0.2*\scale]120:$\bor_1$}] {} (rOne);
      }
    }{
      \draw [thick,->] (O) -- node[pos=1, fill=none, label={\bPrimeAngle:$\bob^\prime$}] {} (bPrime);
      \draw [thick,->] (O) -- node[pos=1, fill=none, label={\rOneAngle:$\bor_1$}] {} (rOne);
    }
    \pgfmathsetmacro{\bRZeroCircleRad}{abs(sin(\bAngle - \rZeroAngle))}
    \pgfmathsetmacro{\bRZeroCircleLoc}{abs(cos(\bAngle - \rZeroAngle))}
    \tdplotdrawarc[tdplot_rotated_coords]{(0,0,\bRZeroCircleLoc*\scale)}{\bRZeroCircleRad*\scale}{0}{180}{anchor=north}{  }
    \tdplotdrawarc[tdplot_rotated_coords, style=dashed]{(0,0,\bRZeroCircleLoc*\scale)}{\bRZeroCircleRad*\scale}{180}{360}{anchor=north}{  }

    \pgfmathsetmacro{\bROneCircleRad}{abs(sin(\bAngle - \rOneAngle))}
    \pgfmathsetmacro{\bROneCircleLoc}{cos(\bAngle - \rOneAngle)}
    \tdplotdrawarc[tdplot_rotated_coords]{(0,0,\bROneCircleLoc*\scale)}{\bROneCircleRad*\scale}{0}{180}{anchor=north}{  }
    \tdplotdrawarc[tdplot_rotated_coords, style=dashed]{(0,0,\bROneCircleLoc*\scale)}{\bROneCircleRad*\scale}{180}{360}{anchor=north}{  }

    \tdplotsetrotatedcoords{\aAngle}{90}{-90}
    \pgfmathsetmacro{\aRCircleRad}{abs(sin(\aAngle - \rZeroAngle))}
    \pgfmathsetmacro{\aRCircleLoc}{abs(cos(\aAngle - \rZeroAngle))}
    \tdplotdrawarc[tdplot_rotated_coords]{(0,0,\aRCircleLoc*\scale)}{\aRCircleRad*\scale}{0}{180}{anchor=north}{  }
    \tdplotdrawarc[tdplot_rotated_coords, style=dashed]{(0,0,\aRCircleLoc*\scale)}{\aRCircleRad*\scale}{180}{360}{anchor=north}{  }
  \end{tikzpicture}
}

\newtheorem{thm}{Theorem}
\newtheorem{defi}{Definition}
\newtheorem{lem}{Lemma}

\newtheorem{cor}{Corollary}

\begin{document}

\title{On Quantum Uncertainty Relations and Uncertainty Regions}

\author{Paul Busch and
  Oliver Reardon-Smith\footnote{\href{mailto:ors510@york.ac.uk}{ors510@york.ac.uk}}}
\affiliation{Department of Mathematics, University of York, York YO10 5DD, UK}
\date{\today}

\begin{abstract}\noindent
  Given two or more non-commuting observables, it is generally not possible to simultaneously assign precise values to each. This quantum mechanical uncertainty principle is widely understood to be encapsulated by some form of uncertainty relation,
  expressing a trade-off between the standard deviations or other measures of uncertainty of two (or more) observables, resulting from their non-commutativity.  Typically, such relations are coarse, and miss important features. It was not until very recently that a broader perspective on quantum uncertainty was envisaged and explored, one  that utilises the notion of an {\em uncertainty region}.  Here we review this new approach, illustrating it with pairs or triples of observables in the case of qubit and qutrit systems. We recall some of the shortcomings of traditional uncertainty relations, and highlight their inability to identify the full uncertainty region. These shortcomings suggest a precautionary note  that, we surmise, ought to accompany the presentation of the uncertainty principle in introductory quantum mechanics courses.
\end{abstract}

\maketitle

\tableofcontents

\section{Introduction}

\noindent
The textbook expression of the uncertainty principle is given by the standard uncertainty relation,
\begin{equation}\label{eq:HUR}
  \sdev{\opa}\,\sdev{\opb}\ge \frac12\bigl| \expnr{\comm{\opa}{\opb}}\bigr|\,.
\end{equation}
Here $\opa,\opb$ are self-adjoint operators representing two observables, whose standard deviations $\sdev{\opa}$, $\sdev{\opb}$ are constrained by the (modulus of the) expectation value of the commutator of $\opa,\opb$. This relation was originally conceived for position and momentum by Heisenberg\cite{heisenberg-ur}, with formal proofs provided by Kennard\cite{kennard-ur} and Weyl\cite{weyl-ur}. The above general form is due to Robertson\cite{robertson-ur}; it was soon strengthened by Schr\"odinger\cite{schrodinger-ur}, who deduced a tighter bound by including the so-called covariance term,
\begin{equation}\label{eq:SUR}
  \var{\opa}\,\var{\opb}\ge\frac14\bigl| \expnr{\comm{\opa}{\opb}} \bigr|^2 + \frac14\bigl(\expnr{\acom\opa\opb}-2\expnr{\opa}\expnr{\opb}\bigr)^2.
\end{equation}
These inequalities were originally presented for vector states from the system's Hilbert space, but also hold for mixed states, represented by density operators $\rho$. We use the standard notation $\expnr{\opa}=\expr{\opa} =\operatorname{tr}[\rho\opa]$ for expectation values and $\var{\opa}=\varr{\opa}=\langle\opa^2\rangle_\rho-\expr{\opa}^2$ for variances. 

For many decades, the task of providing a quantitative statement of the uncertainty principles was considered to be settled by stating the above inequalities.
Still, a closer look shows that these relations do not have all the features one might justifiably require of an uncertainty bound. For instance, in the case of observables with discrete bounded spectra, both \eqref{eq:HUR} and \eqref{eq:SUR} reduce to trivialities: if $\rho$ is an eigenstate of (say) $\opa$, so that $\sdevr{\opa}=0$, the inequalities entail no constraint on the value of $\sdevr{\opb}$. A remedy to this particular deficiency came with the discovery of other forms of uncertainty relations, based on the minimisation of functionals of $\sdev{\opa}, \sdev{\opb}$ other than the product\cite{Pati,physRevA.86.024101}. Another issue lies in the fact that the uncertainty of a quantity may not always be best described by the variance, or, more generally, the moments of its distribution; accordingly, new forms of uncertainty relations have been proven for alternative measures of uncertainty, such as entropies\cite{birbil,deutsch-entropic-ur,maassen-uffink-entropic-ur} or overall width\cite{UH}. We will illustrate another curiosity below: the limiting case of equality in \eqref{eq:HUR} may not always indicate minimum uncertainty.

All these forms of uncertainty relation describe aspects of what we refer to as {\em preparation uncertainty}---they are characterisations of the irreducible uncertainty of the values of observables in any given quantum state. Experimentally, an uncertainty trade-off such as that described by the inequality \eqref{eq:HUR} may be tested by separate runs of {\em accurate} measurements of the two observables $\opa,\opb$ under consideration: the statistics of $\opa$ and $\opb$, measured in distinct ensembles of systems described by the same state, will obey \eqref{eq:HUR} and \eqref{eq:SUR}; hence, if $\opa,\opb$ do not commute, the distributions of $\opa,\opb$ cannot both be arbitrarily sharp.
There is another side to Heisenberg's principle, which concerns the necessary error bounds for any {\em joint approximate measurement} of two observables $\opa,\opb$. The problem of finding rigorous formulations of such {\em measurement uncertainty relations} has become a focus of research efforts in recent years\cite{BLW-coll}. We will not enter this subject here but note that  in a number of case studies, measurement uncertainty relations were found to be deducible from associated preparation uncertainty relations. Hence the ideas and observations made in this paper for the latter may be of use for future investigations of the former.

Rather than asking for bounds on some particular choice of uncertainty functional, such as the product or sum of uncertainties, it is of interest to know the \emph{uncertainty region of $\opa$ and $\opb$}, defined as the whole range of possible value pairs $(\sdevr{\opa},\sdevr{\opb})$. This notion does not seem to have considered until recently when similar problems were envisaged with respect to measurement uncertainty\cite{werner-ang-mom,li-refomulating-uncertainty-principle,zhang-etal-stronger-ur,busch-bullock-qubit}: the concept of {\em error region} was introduced as the set of {\em admissible} pairs of approximation errors for joint measurements of non-commuting quantities\cite{busch-heinosaari-qubit}. Arguably, the content of the uncertainty principle can be captured as a statement concerning the `lower boundary' of the preparation and measurement uncertainty regions:
if $\opa,\opb$ do not commute, these regions cannot, in general, contain all points near the origin of the relevant uncertainty diagrams.

The purpose of the present work is to give an accessible introduction of the subject of uncertainty regions, offering worked examples for pairs of observables in low-dimensional Hilbert spaces. We also explore the logical relation between characterisations of uncertainty regions and standard uncertainty relations.

The  paper is organised as follows. After a brief review of the uncertainty region for the position and momentum of a particle in the line (Section \ref{sec:qp}), we give a general definition of the uncertainty region (Section \ref{sec:pur-def}) and proceed to consider the qubit case in some detail (Section \ref{sec:qubit}). We then proceed to determine uncertainty regions for some pairs of qutrit observables, noting interesting contrasts with the case of qubit observables (Section \ref{sec:qutrit}). We conclude with a summary and some general observations (Section \ref{sec:conclusion}).

\section{Uncertainty regions}

\subsection{Warm-up: a review of position and momentum}\label{sec:qp}

The Heisenberg uncertainty relation for position $\operatorname{Q}$ and momentum $\operatorname{P}$ of a particle on a line is given by the inequality \eqref{eq:HUR},
\begin{equation}\label{eq:qp-ur}
  \sdevr{\opq}\,\sdevr{\opp}\ge\frac{\hbar}{2},
\end{equation}
valid for all states $\rho$ for which both variances are finite. This relation
is tight  in the following sense: for any pair of numbers $(\sdev{\opq}$, $\sdev{\opp})$ with $\sdev{\opq}\,\sdev{\opp}\ge \hbar/2$, there exists a state $\rho$ such that $\sdev{\opq} =\sdevr{\opq}$ and $\sdev{\opp}=\sdevr{\opp}$.  In particular, points of the lower bounding hyperbola branch in the positive quadrant of the $\sdev{\opq}$-$\sdev{\opp}$-plane are realized by pure states, $\rho=|\psi\rangle\langle\psi|$, where the unit vector $\psi$ represents a Gaussian wave function. Moreover, it is not hard to show that every point in the area above the hyperbola can be realized by some quantum state, so that the whole uncertainty region for position and momentum is described by the uncertainty relation \eqref{eq:qp-ur} (Fig. \ref{fig:QP-UR}).

It is interesting to note that the inequality \eqref{eq:qp-ur}  can be equivalently recast  in the form of {\em additive} uncertainty relations: let $\ell>0$ be an arbitrary fixed parameter with the  dimension of length, then
\begin{align}
  \frac{\sdevr{\opq}}{\ell}+\frac{\ell\sdevr{\opp}}{\hbar}&\ge \sqrt2,\label{eq:qp-add-lin}\\
  \frac{\varr{\opq}}{\ell^2}+\frac{\ell^2\varr{\opp}}{\hbar^2}&\ge 1.\label{eq:qp-add-quadr}
\end{align}
The proof of this equivalence follows from an elementary algebraic observation: given arbitrary $\xi,\eta>0$ we have the simple identity
\begin{align}
  \frac \xi x+x \eta=\left(\sqrt{\frac \xi x}-\sqrt{x \eta}\right)^2&+2\sqrt{\xi \eta},
\end{align}
valid for all $x>0$. This quantity assumes its  minimal value $2\sqrt{\xi \eta}$ at $x=\sqrt{\xi/ \eta}$. Therefore, if $C$ is a positive constant, then
\begin{equation}\label{eq:equiv}
  \xi\eta\ge C\quad\iff\quad\forall x>0:\,\frac\xi x+x\eta\ge 2\sqrt{C}.
\end{equation}
Putting
$(\xi,\eta,C)= (\sdev{\opq}/\ell,\ell\sdev{\opp}/\hbar,1/2)$ or $=(\var{\opq}/\ell^2,\ell^2\var{\opp}/\hbar^2,1/4)$ and choosing $x=1$, we see that the uncertainty relation \eqref{eq:qp-ur} entails  \eqref{eq:qp-add-lin} and \eqref{eq:qp-add-quadr}, for every state $\rho$ via the equivalence \eqref{eq:equiv}.

To obtain the reverse implication, we have to make the stronger assumption that one of the additive inequalities, say \eqref{eq:qp-add-lin}, holds {\em for all} $\rho$, for some fixed value $\ell$. To show that then this inequality holds for all $\ell$, we replace $\ell$ with $\ell'\equiv x\ell$,  with $x>0$. Using the unitary scaling transformation,
\begin{equation}
  U_\tau=\exp\left[\frac{i}{2\hbar} \tau(\opq\opp+\opp\opq)\right],\quad \tau=\ln x,
\end{equation}
we have $U_\tau^* \opq U_\tau=\opq/x\equiv \opq_x$, $U_\tau^* \opp U_\tau=x\opp\equiv \opp_x$, and set $\rho_x = U_\tau \rho U_\tau^*$. We then calculate:
\begin{align}
  \frac{\sdevr{\opq}}{x\ell}+\frac{x\ell\sdevr{\opp}}{\hbar}=
  \frac{\sdevr{\opq_x}}{\ell}+\frac{\ell\sdevr{\opp_x}}{\hbar}
  =\frac{\sdev[\rho_x]{\opq} }{\ell}+\frac{\ell\sdev[\rho_x]{\opp}}{\hbar}\ge \sqrt{2}.
\end{align}
Therefore, using \eqref{eq:equiv}, we conclude that \eqref{eq:qp-ur} follows from \eqref{eq:qp-add-lin} (and similarly from \eqref{eq:qp-add-quadr}). This completes the proof.

Geometrically, the limiting case of equality in \eqref{eq:qp-add-lin} represents a family of straight lines tangent to the hyperbola plotted in a $\sdev{\opq}-\sdev{\opp}$-diagram given by $\sdev{\opq}\sdev{\opp}=\hbar/2$; the totality of these tangents defines the hyperbola. Similarly, the second additive inequality bound \eqref{eq:qp-add-quadr} gives a family of ellipses (with axes given by the coordinate axes) tangent to the hyperbola, again defining it (see Fig. \ref{fig:QP-UR}). We conclude that Heisenberg's uncertainty relation or any of its additive equivalents completely determine the position-momentum uncertainty region.

\begin{center}
  \begin{figure*}[ht]
    \includegraphics[width=.6\textwidth]{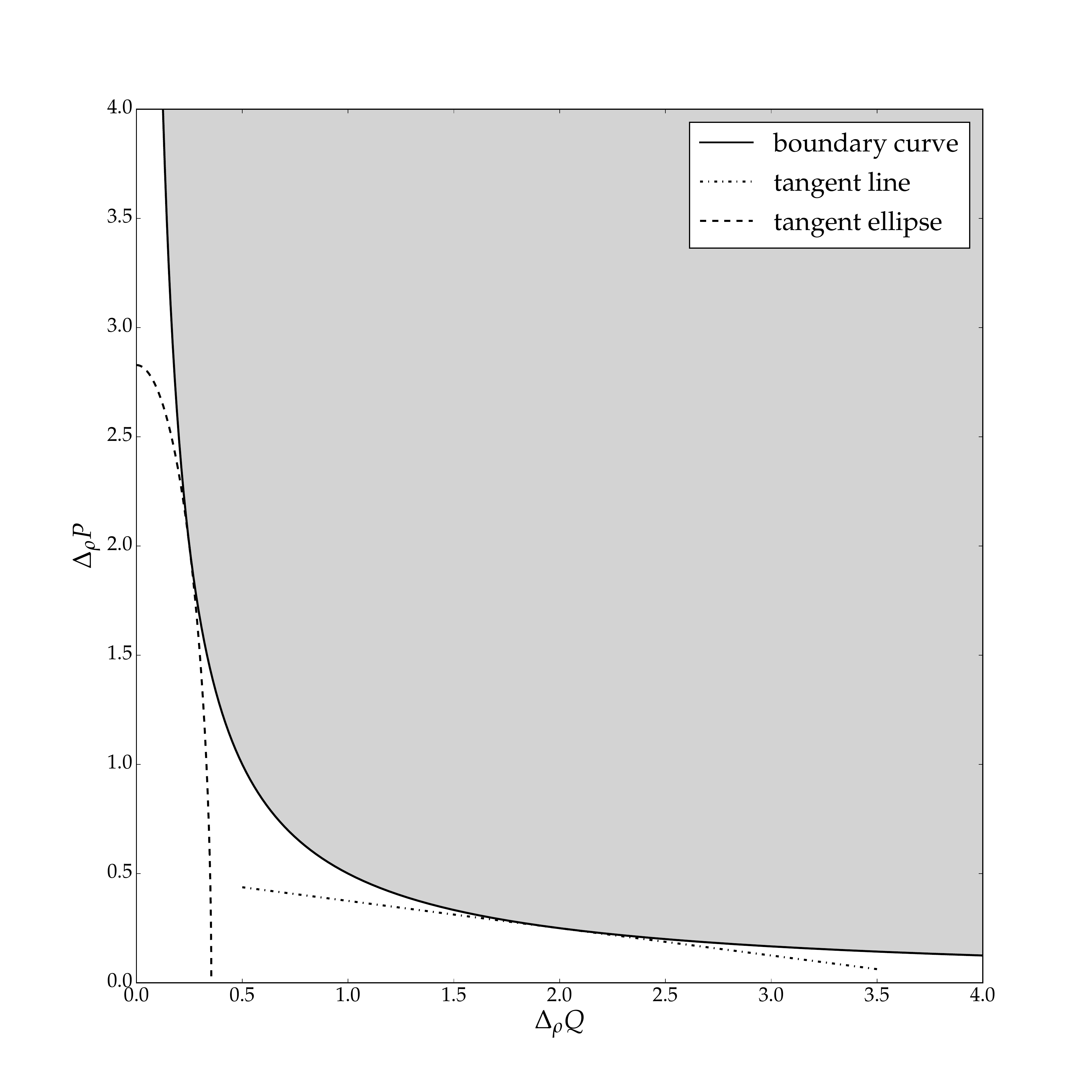}
    \caption{The uncertainty region for the standard deviations of position and momentum (in units where $\hbar = 1$). The solid boundary line represents the hyperbola $\sdev{\opq}\sdev{\opp}=\frac{1}{2}$, and the dash-dotted and dotted curves show examples of the tangential straight and elliptic line segments represented by the bounds given in \eqref{eq:qp-add-lin} and \eqref{eq:qp-add-quadr}, respectively.}
    \label{fig:QP-UR}
  \end{figure*}
\end{center}

\subsection{Uncertainty region: general definition}\label{sec:pur-def}
We seek to explore further the feature of tightness of an uncertainty relation  and so adopt the following definitions (see, e.g., Ref. \onlinecite{abbott-state-indep-qubits}).
We will understand tightness in the sense that the given uncertainty relation fully characterizes the set of admissible uncertainty pairs. In order for an inequality for the uncertainties to achieve this, it is necessary that the only state-dependent terms are the uncertainties themselves; hence such inequalities are of the form $f(\sdev{\opa},\sdev{\opb}, \opa,\opb)\ge 0$. As the reference to $\rho$ can then be dropped, we refer to such uncertainty relations as {\em state-independent}, in line with the terminology introduced in  Ref.~\onlinecite{abbott-state-indep-qubits}.
The term {\em tight} is sometimes used to describe an inequality for a set of variables where the limiting case of equality can be reached for some values; here instead we refer to this situation as {\em saturation} of the inequality. 

\begin{defi}\rm
  The {\em (preparation) uncertainty region} for a pair of observables $\opa$ and $\opb$ is the set of points $(\sdev{\opa}, \sdev{\opb})\in \mathbb{R}^2$  that can be realised by some quantum state, $\rho\in\si(\hi)$, that is,
  \begin{equation}
   {\rm PUR}_\Delta(\opa,\opb)= \bigl\{(\sdev{\opa},\sdev{\opb})\,|\,\,\exists\rho\in\si(\hi)\,:\,\sdev{\opa}=\sdevr{\opa},\ \sdev{\opb}=\sdevr{\opb}\bigr\}.
  \end{equation}
\end{defi}
\begin{defi}\label{def:tight}\rm
  A state-independent uncertainty relation, given by an equality, inequality or set of such, for the uncertainties $\sdevr{\opa}$, $\sdevr{\opb}$ of observables $\opa$, $\opb$ will be called {\em tight} if it is satisfied for exactly the points $(\sdev{\opa}, \sdev{\opb})$ inside the uncertainty region.
\end{defi}
Although we focus here mostly on pairs of observables the definitions may be generalised to $n$ observables in the obvious way. Furthermore, one may also take alternative measures of uncertainty instead of the standard deviations. We will occasionally use the variance instead of the standard deviation, where the former is more appropriate.

It is natural to ask whether the tightness of the inequality \eqref{eq:HUR} (and hence \eqref{eq:SUR}) extends beyond the case of position and momentum. More generally, one can ask for any pair (or family) of observables whether the associated uncertainty region can be characterised by suitable uncertainty relations (which then would be tight).

For the purposes of finding expressions of the uncertainty principle, it is sufficient to focus on specifying the curve defined by fixing the value of $\sdev{\opa}$ and finding $\rho\in\si(\sdev{\opa})\coloneqq\{\rho\,|\,\sdevr{\opa}=\sdev{\opa}\}$ such that $\sdevr{\opb}$ is minimized:
\begin{equation}\label{eq:minuAB}
  \sdevmin{\opb}\equiv\min\left\{\sdevr{\opb}\,|\,\rho \in\si(\sdev{\opa}) \right\}.
\end{equation}
Assuming (as we do henceforth) that the underlying Hilbert space is finite-dimensional, then the set of states $\si(\hi)$ is compact in any norm topology (trace norm, operator norm, etc.). Therefore, the continuity of the map $\rho\mapsto (\sdevr{\opa},\sdevr{\opb})$ ensures that the preparation uncertainty region and the subset of states $\si(\sdev{\opa})$ are  compact. It follows that the minimum in \eqref{eq:minuAB} exists. Hence the uncertainty region has a well-defined  lower boundary curve (and similarly upper and side boundary curves).

We illustrate cases where there are non-trivial \emph{upper} bounds for $\sdevr{\opb}$ for some values of $\sdevr{\opa}$. Additionally, when examining qutrit observables in section \ref{sec:qutrit}, we discover that the uncertainty region is not necessarily of a `simple' shape, such as a convex set. In these cases the uncertainty region is too complicated to be conveniently described by a single inequality, but may be given, for example, in terms of its bounding curves.

An extensive study of uncertainty regions for spin components was undertaken by  Dammeier et al.\cite{werner-ang-mom}
However, the features uncovered in these cases are not representative, as illustrated by the example we examine in section \ref{sec:qutrit}. In particular we note that if the point $(\sdev{\opa},\sdev{\opb} ) = (0,0)$ is in the uncertainty region, then the monotone-closure procedure, taking the set of points $\{(x,y) |\, \exists \rho \text{ s.t. } x\geq \sdevr{\opa},\, y\geq\sdevr{\opb}\}$, employed to great effect in the spin case, has the undesirable property that the closure defined is the \emph{entire} positive quadrant.

A state dependent bound for the joint expectation values of an $n$-tuple of sharp, $\pm 1$-valued observables was given by Kaniewski, Tomamichel and Wehner \cite{PhysRevA.90.012332}. Since a binary probability distribution is entirely characterised by the expectation value this provides an implicit characterisation of the uncertainty region. A complete characterisation of the uncertainty region in terms of variances for pairs of measurements on qubits was given by Li and Qiao\cite{li-refomulating-uncertainty-principle}. However, their relation is an implicit rather than explicit one, with the expectation values and variances of each measurement appearing on both sides of the inequality. Abbott et al\cite{abbott-state-indep-qubits}  then derived the full qubit uncertainty region in a way which more readily generalises to provide (not-necessarily tight) bounds in higher dimensional systems and for more than two observables.

Some analytical, as well as some semianalytical bounds on uncertainty regions were recently given by Szymański and \.{Z}yczkowski\cite{1804.06191}, who also give a method for writing a saturated, state independent bound for a general ``sum of variances'' uncertainty relation as a polynomial root finding problem.

Here we review the case of qubit observables, providing yet another proof of a geometric flavor that immediately focuses on and highlights the extremality property that defines the boundary of the uncertainty region (Section \ref{sec:qubit}). We also investigate to what extent the standard uncertainty relations may or may not characterise the uncertainty region and find that the Schr\"odinger inequality cannot, in general, be cast in a state-independent form as defined here. The examples of pairs of qutrit observables given in Section \ref{sec:qutrit} show that structural features found in the qubit case are no longer present in higher dimensions, for example the uncertainty region for two sharp, $\pm 1$-valued qubit observables contains the origin if and only if they commute.

\section{Optimal qubit uncertainty relations}\label{sec:qubit}

We consider sharp qubit observables with measurement outcomes (eigenvalues) $\pm 1$. These are represented as Hermitian operators (or $2\times 2$-matrices) of the form $\opa=\boa\cdot\bosig=a_x\sigma_x+a_y\sigma_y+a_z\sigma_z$, where vector $\boa$ has Euclidean length  $\|\boa\|=1$ and  $\sigma_x,\sigma_y,\sigma_z$ denote the Pauli matrices on ${\mathbb C}^2$. A general qubit state may be expressed as the density operator
\begin{equation}
  \rho=\frac12(I+\bor\cdot\bosig),\quad \|\bor\|=r\le 1,
\end{equation}
where $\opi$ denotes the identity operator (unit matrix). Note that $\rho$ is a pure state if and only if $\|\bor\|=1$.

For $\opa=\boa\cdot\bosig$, we have  $\expr{\opa}=\boa\cdot\bor$ and, since $\opa^2=\opi$,
\begin{equation}
  \varr{\opa}=1-(\boa\cdot\bor)^2=1-\|\bor\|^2+\|\bor\times\boa\|^2.
\end{equation}

We recall  that for unit 3-vectors $\boa$ and $\bob$ separated by angle $\theta$ we have $\boa\cdot\bob=\cos\theta$ and $\|\boa\times\bob\| = \sin\theta$.
We also note  the operator norm of the commutator of $\opa,\opb$ is given by
\begin{equation}
  \bigl\|[\opa,\opb]\bigr\| = 2\|\boa\times\bob\|,
\end{equation}
which suggests that $\sin\theta=\|\boa\times\bob\|$ is the relevant quantity to measure the degree of noncommutativity (incompatibility) of $\opa$ and $\opb$.

\subsection{Uncertainty bounds for $\sigma_x,\sigma_y,\sigma_z$}\label{sec:xyz}
Considering the variances of $\sigma_x,\sigma_y,\sigma_z$ in a general state $\rho$,
\begin{equation}
  \varr{\sigma_x}=1-r_x^2,\quad \varr{\sigma_y}=1-r_y^2,  \quad \varr{\sigma_z}=1-r_z^2,
\end{equation}
it is easy to see that the positivity condition for $\rho$, $r_x^2+r_y^2+r_z^2\le 1$, is in fact equivalent to the following additive uncertainty relation:
\begin{align}\label{eq:urxyz}
  \varr{\sigma_x}+\varr{\sigma_y}+\varr{\sigma_z}=3-\|\bor\|^2\ge 2.
\end{align}
This inequality is saturated if and only if $\rho$ is a pure state ($r^2=1$).
Given that the standard deviations $\sdevr{\sigma_k}\in[0,1]$, the uncertainty region for the triple $(\sigma_x,\sigma_y,\sigma_z)$ is given as the complement of the open ball at the origin with radius $\sqrt{2}$ intersected with the unit cube $[0,1]\times[0,1]\times[0,1]$ (Fig.~\ref{fig:orth-triple}). The inequality \eqref{eq:urxyz} is an instance of a general triple uncertainty relation for the components of a spin-$s$ system, with the bound for the sum of variances being $s$, as shown in Ref.~\onlinecite{Hofmann03}.

\begin{center}
  \begin{figure*}[ht]
    \includegraphics[width=.5\textwidth]{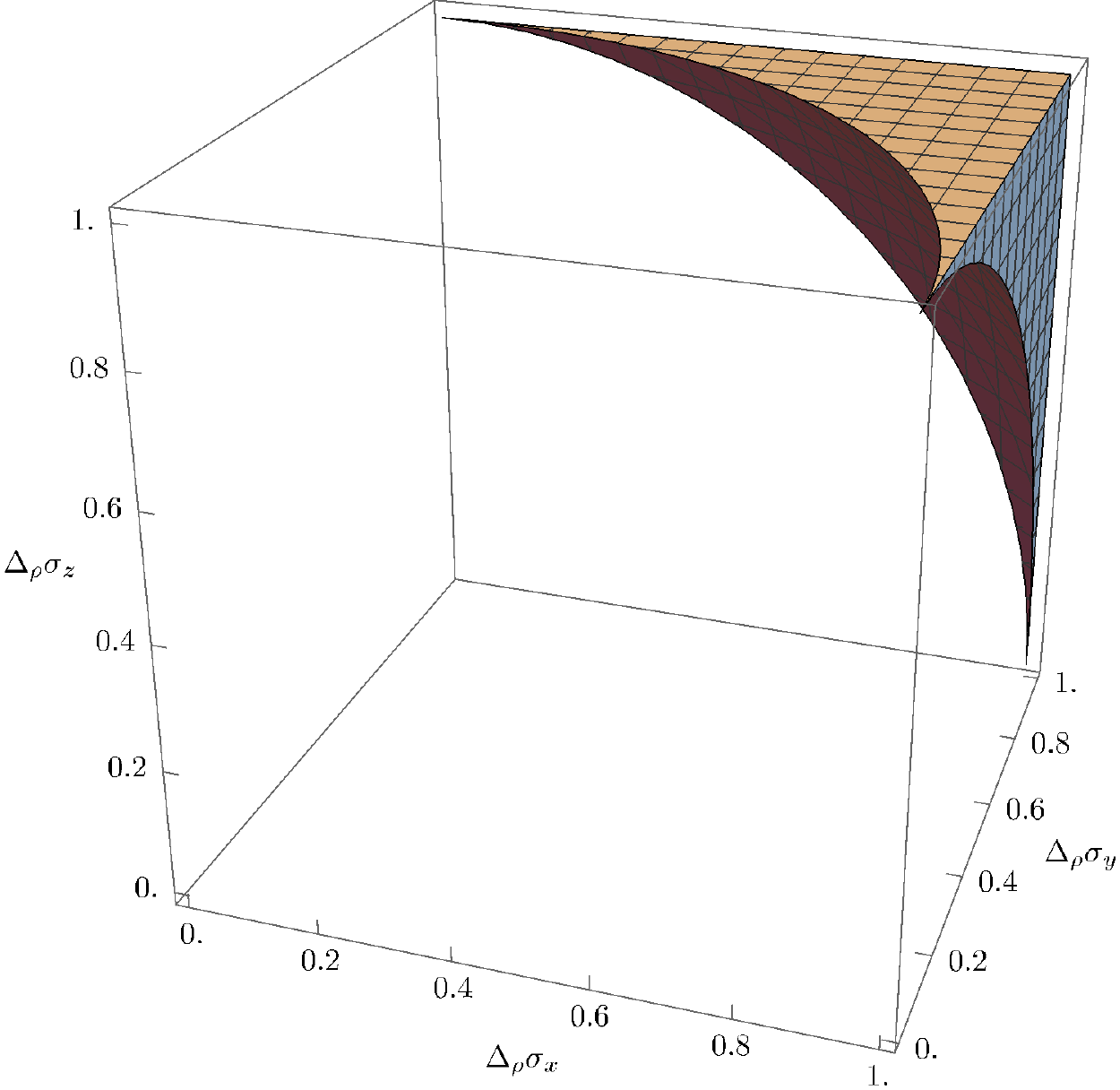}
    \caption{The uncertainty region for the standard deviations of the qubit triple $(\sigma_x,\sigma_y,\sigma_z)$. Note that the top surface with $\sdev{\sigma_z}=1$, shows the uncertainty region for the pair $\sigma_x,\sigma_y$, determined by
    $\varr{\sigma_x}+\varr{\sigma_y}\ge 1$, which can be filled with states whose Bloch vectors have component $r_z=0$.}
    \label{fig:orth-triple}
  \end{figure*}
\end{center}

We briefly revisit and compare the Heisenberg and Schr\"odinger inequalities for spin components.
The uncertainty relation \eqref{eq:HUR} for $\sigma_x,\sigma_y$ is equivalent to
\begin{equation}\label{eq:HUR-sig}
  \varr{\sigma_x}\,\varr{\sigma_y}\ge |\langle \sigma_z\rangle_\rho|^2=1-\varr{\sigma_z}.
\end{equation}
We note that the lower bound on the right hand side becomes zero if $r_z=0$, in which case this inequality gives no constraint on the variances on the left hand side. However, using  \eqref{eq:urxyz}, we obtain
\begin{equation}\label{eq:prodbd}
  \varr{\sigma_x}\,\varr{\sigma_y}\ge\varr{\sigma_x}\bigl(2-\varr{\sigma_x}-\varr{\sigma_z}\bigr)\ge\varr{\sigma_x}\bigl(1-\varr{\sigma_x}\bigr).
\end{equation}
We see that the bound is now nontrivial for all $\rho$ with $\sdevr{\sigma_x}=\sdev{\sigma_x}\in(0,1)$. It can be as large as $1/4$, which is obtained when $\bor=(\pm1,\pm1,0)/\sqrt2$. The above inequality is equivalent to the following, which is also entailed directly by \eqref{eq:urxyz} bearing in mind that $\varr{\sigma_z}\le 1$:
\begin{equation}\label{eq:urxy}
  \varr{\sigma_x}+\varr{\sigma_y}\ge 1.
\end{equation}
From this, we can straightforwardly obtain the minimum \eqref{eq:minuAB} for $\varr{\sigma_y}$ given a fixed $\varr{\sigma_x}\in(0,1)$.
In fact, $\varr{\sigma_y}$ is minimized  when $\varr{\sigma_y}=1-\varr{\sigma_x}$. This is equivalent to $r_x^2+r_y^2=1$, which entails $r_z=0$, that is, $\varr{\sigma_z}=1$. This means, in particular, that  the bound given by \eqref{eq:HUR-sig} becomes trivial and that given by \eqref{eq:SUR} is tight.

It is instructive to consider the conditions under which the Heisenberg inequality \eqref{eq:HUR-sig} is saturated. This gives
$(1-r_x^2)(1-r_y^2)=r_z^2$, or $1+r_x^2r_y^2=r_x^2+r_y^2+r_z^2$. Since the right-hand side is never greater than $1$ and the left hand side never less, both sides must be equal to $1$ and, therefore either $r_x=0$ or $r_y=0$. If $r_x$ is fixed and  non-zero, then $r_y=0$, which is to say that $\varr{\sigma_y}=1$.

Note that here $\varr{\sigma_y}$  is maximal rather than minimal. Saturation of the standard uncertainty relation for these observables thus leads to maximising the uncertainty product instead of minimising it, as one might, naively, have expected. In contrast, equality in \eqref{eq:urxy} forces minimality of the uncertainty product. We also see that \eqref{eq:HUR-sig} forces maximality $\varr{\sigma_x}=\varr{\sigma_y}=1$ by requiring minimal uncertainty for $\sigma_z$, whereas \eqref{eq:urxy} does not stipulate this.

Taking into account the natural upper bound of 1 for the variances, the  inequality \eqref{eq:urxy} is tight, that is, it captures exactly the uncertainty region for $\sigma_x,\sigma_y$, while \eqref{eq:HUR-sig} does not. (As we observed above, \eqref{eq:HUR-sig} does not set a positive lower bound for $\varr{\sigma_y}$ when $\varr{\sigma_z}=1$.)

Since $0\le\bigl(1-\varr{\sigma_x}\bigr)\bigl(1-\varr{\sigma_y}\bigr)$, we have
\begin{equation}
  \varr{\sigma_x}\varr{\sigma_y}\ge \varr{\sigma_x}+\varr{\sigma_y}-1\ge1-\varr{\sigma_z},
\end{equation}
where the latter inequality is obtained from $\|\bor \|^2\le 1$ or the equivalent relation \eqref{eq:urxyz}. It follows that, just like \eqref{eq:urxy}, \eqref{eq:HUR-sig} is also a consequence of (and indeed weaker than) \eqref{eq:urxyz}.

The fact that \eqref{eq:urxyz} implies \eqref{eq:HUR-sig} should not be surprising once one realises that the former inequality  is indeed equivalent to the Schr\"odinger relation \eqref{eq:SUR}, which takes the following form in the present case
\begin{equation}
  \varr{\sigma_x}\,\varr{\sigma_y}\geq \langle\sigma_z\rangle_\rho^2+\langle\sigma_x\rangle_\rho^2\langle\sigma_y\rangle_\rho^2=\bigl(1-\varr{\sigma_z}\bigr)+
  \bigl(1-\varr{\sigma_x}\bigr)\bigl(1-\varr{\sigma_y}\bigr).
\end{equation}
This is equivalent to
\begin{equation}
  \varr{\sigma_x}\,\varr{\sigma_y}\geq 2-\bigl(\varr{\sigma_x}+\varr{\sigma_y}+\varr{\sigma_z} \bigr) +\varr{\sigma_x}\,\varr{\sigma_y},
\end{equation}
and hence to \eqref{eq:urxyz}, and ultimately to $\|\bor\|^2\le 1$, anticipating the results of section \ref{sec:sch-uncertainty-relation}.

We summarise:
\begin{enumerate}
  \setlength\itemsep{0pt}
\item[(1)]
  The Schr\"odinger inequality (but not the Heisenberg inequality) for $\sigma_x,\sigma_y$ determines their uncertainty region.
  
  \item[(2)]
  Saturation of the Heisenberg inequality does not entail minimal, but instead maximal, uncertainty  (i.e., maximal $\sdev{\sigma_y}$ given $\sdev{\sigma_x}\not\in\{0,1\}$).

\item[(3)]
  The uncertainty region for $\sigma_x,\sigma_y$ is the intersection of the unit square $[0,1]\times[0,1]$ with the complement of the open unit ball $\var{\sigma_x}+\var{\sigma_y}< 1$. The lower boundary is reached exactly for pure states with $\sdev{\sigma_z}=1$, which entails the vanishing of the commutator term in the Heisenberg inequality \eqref{eq:urxy} (which therefore becomes trivial at minimum uncertainty). In this case, one has equality in the Schr\"odinger relation, and the uncertainty bound is found to be entirely due to the covariance term.
  
  \item[(4)] The Schr\"odinger inequality, due to its equivalence with \eqref{eq:urxyz}, also determines the triple uncertainty region for $\sigma_x,\sigma_y,\sigma_z$, Fig.~\ref{fig:orth-triple}. 
  
  \item[(5)]
  Saturation of the Schr\"odinger inequality and  the equivalent triple uncertainty relation \eqref{eq:urxyz} is given exactly on the set of pure states. Hence, all pure states are minimum uncertainty states for the triple $\sigma_x,\sigma_y,\sigma_z$.

\end{enumerate}

\subsection{Uncertainty region for general $\pm 1$-valued qubit measurements}
\label{sec:qubit-uncertainty}
We  now consider general observables represented as $\opa=\boa\cdot\bosig$, $\opb=\bob\cdot\bosig$ where $\boa$ and $\bob$ are unit vectors but no longer assumed to be orthogonal. Observables of this form are sufficient to explore the shapes of uncertainty regions since any two outcome qubit observable may be simulated by one of this form using classical post processing (relabelling the measurement outcomes $\pm1$ and adding classical noise).

We begin by noting some simple known examples of state-independent uncertainty relations for the pair $\opa,\opb$ given in Ref.~\onlinecite{BLW14}:
\begin{align}
  \label{eqn:straight-lower-bound-blw}
  \sdev{\opa}+\sdev{\opb}\ &\ge \ \tfrac12\bigl\|[\opa,\opb]\bigr\|,\\
  \label{eqn:curved-lower-bound-blw}
  \var{\opa}+\var{\opb}\ &\ge \ 1-\sqrt{1-\tfrac14\bigl\|[\opa,\opb]\bigr\|^2}.
\end{align}
While these are easily proven by elementary means, it is equally easy to see that they are not tight; they only touch the actual lower boundary curve of the uncertainty region at isolated points. Nevertheless they are of a simple form and illustrate the concept of a state-independent uncertainty bound.

In the following considerations we will make use of the identity
\begin{equation}\label{eq:r2-id}
  \left\|\boa\times\bob\right\|^2 \left\|\bor\right\|^2 = \bigl((\boa\times\bob)\cdot\bor\bigr)^2 + \bigl\|(\boa\times\bob)\times\bor\bigr\|^2 ,
\end{equation}
which is a version of Pythagoras' law. In the special case that $\|\boa\|=\|\bob\|=\|\bor\|=1$ and $\bor\perp\boa\times\bob$, this yields
\begin{equation}
  \left\|\boa\times\bob\right\| = \bigl\|\bob(\boa\cdot\bor)-\boa(\bob\cdot\bor)\bigr\| = \bigl\|\boa(\boa\cdot\bor)-\bob(\bob\cdot\bor)\bigr\|.
\end{equation}
We obtain the following Lemma.
\begin{lem}\rm
  Let $\boa,\bob$ be unit vectors spanning a plane $P$. For any unit vector $\bor\in P$, denote $\boa^*=\boa(\boa\cdot\bor)$,  $\bob^*=\bob(\bob\cdot\bor)$ and $\boex=\bor-\boa^*$, $\boy=\bor-\bob^*$. Then
  \begin{equation}\label{eq:axb}
    \bigl\|\boa\times\bob\bigr\| = \|\boa^*-\bob^*\|=\|\boex-\boy\|.
  \end{equation}
\end{lem}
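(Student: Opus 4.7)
The plan is to deduce the lemma from equation \eqref{eq:r2-id} together with the vector triple product identity, observing that the displayed discussion immediately preceding the lemma already contains the essential content.

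First, I would dispose of the easy equality. Since $\boex = \bor - \boa^*$ and $\boy = \bor - \bob^*$, we have $\boex - \boy = \bob^* - \boa^*$, so trivially $\|\boex - \boy\| = \|\boa^* - \bob^*\|$. No further work is needed for this half.

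The substantive equality is $\|\boa\times\bob\| = \|\boa^* - \bob^*\|$. Because $\bor \in P$ is a linear combination of $\boa$ and $\bob$, it is orthogonal to $\boa\times\bob$, so $(\boa\times\bob)\cdot\bor = 0$. Together with $\|\bor\| = 1$, equation \eqref{eq:r2-id} collapses to
\begin{equation*}
\|\boa\times\bob\|^2 = \bigl\|(\boa\times\bob)\times\bor\bigr\|^2.
\end{equation*}
The vector triple product gives $(\boa\times\bob)\times\bor = (\bor\cdot\boa)\bob - (\bor\cdot\bob)\boa$, so
\begin{equation*}
\|\boa\times\bob\|^2 = (\boa\cdot\bor)^2 - 2(\boa\cdot\bor)(\bob\cdot\bor)(\boa\cdot\bob) + (\bob\cdot\bor)^2,
\end{equation*}
using $\|\boa\| = \|\bob\| = 1$. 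Finally, $\boa^* - \bob^* = (\boa\cdot\bor)\boa - (\bob\cdot\bor)\bob$, and expanding $\|\boa^* - \bob^*\|^2$ yields the same expression, again because $\boa$ and $\bob$ are unit vectors. This is the key step: the two triple-product combinations $(\boa\cdot\bor)\boa - (\bob\cdot\bor)\bob$ and $(\boa\cdot\bor)\bob - (\bob\cdot\bor)\boa$ are generally different vectors but share the same squared norm precisely because $\boa$ and $\bob$ each have unit length.

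I anticipate no real obstacle; the only subtlety is noticing that the naively ``wrong'' pairing coming out of the triple product identity still gives the right norm. Everything else is an unpacking of definitions together with the already-established identity \eqref{eq:r2-id} specialised to $\bor\in P$ and $\|\bor\|=1$.
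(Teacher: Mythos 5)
Your proof is correct and follows essentially the same route as the paper: specialising the identity \eqref{eq:r2-id} to $\|\bor\|=1$ with $\bor\perp\boa\times\bob$, applying the Lagrange triple-product expansion, and observing that the ``swapped'' pairing $(\boa\cdot\bor)\bob-(\bob\cdot\bor)\boa$ has the same norm as $\boa^*-\bob^*$ because $\boa,\bob$ are unit vectors, with $\|\boex-\boy\|=\|\boa^*-\bob^*\|$ being immediate from the definitions. Nothing further is needed.
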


\begin{figure}[h]
  \begin{subfigure}[t]{0.45\textwidth}
    \DrawSphere{60}{90}{90}
    \caption{$\boa\cdot\bor > \boa\cdot\bob \iff \sdev{\opa} < \|\boa\times\bob\|$}
  \end{subfigure}
  \begin{subfigure}[t]{0.45\textwidth}
    \DrawSphere{60}{90}{90}
    \caption{$\boa\cdot\bor = \boa\cdot\bob > \boa\cdot\bob^\prime = \|\boa\times\bob\| \iff \sdev{\opa} = \|\boa\times\bob\|$}
  \end{subfigure}
  \begin{subfigure}[t]{0.45\textwidth}
    \DrawSphere{60}{90}{120}
    \caption{$\boa\cdot\bor = \boa\cdot\bob^\prime = \|\boa\times\bob\| \iff \sdev{\opa} =\boa\cdot\bob$}
  \end{subfigure}
  \begin{subfigure}[t]{0.45\textwidth}
    \DrawSphere{60}{90}{130}
    \caption{$0 < \boa\cdot\bor < \boa\cdot\bob^\prime \iff \sdev{\opa} > \boa\cdot\bob$}
  \end{subfigure}
  \label{fig:semicircle-min}
  \caption{Determining the locations of Bloch vectors $\bor_0,\bor_1$ (in the plane spanned by $\boa,\bob$) for states minimizing and maximizing $\varr{\opb}$ within the set of states with $\sdevr{\opa}=\sdev{\opa}$.
  We consider the case $\boa\cdot\bob > \|\boa\times\bob\|$ only (shown here for $\theta = \frac{\pi}{6}$). }\label{fig:min-max}
\end{figure}

To describe the uncertainty region, we set out to determine the maximum and minimum values of $\varr{\opb}$ given a fixed value $\var{\opa}$ of $\varr{\opa}$.
Minimality (maximality) of $\varr{\opb}$   is equivalent to maximality (minimality) of $(\bor\cdot\bob)^2$ whilst keeping $(\bor\cdot\boa)^2=1-\var{\opa}$ constant. Hence we are looking for the optimisers within the disks that are the intersections of the planes $\bor\cdot\boa=\pm\sqrt{1-\var{\opa}}$ with the Bloch sphere. We may assume $\boa\cdot\bob\ge0$. For the determination of minimal and maximal $\sdevr{\opb}$  it will be sufficient to focus on the disks of constant $\sdev{\opa}$ with $\bor\cdot\boa\ge0$, and look for the two disks of constant  $\bor\cdot\bob$ which intersect the former disk in just one point. The resulting vectors $\bor_0,\bor_1$ (which are or can be chosen to lie in the plane spanned by $\boa,\bob$) are those giving the largest, resp. smallest, non-negative value of $\bob\cdot\bor$ within the disk of vectors satisfying $\bor\cdot\boa=\sqrt{1-\var{\opa}}$.

\begin{figure}[ht]
  \begin{subfigure}[t]{0.45\textwidth}
    \DrawSemiCircle[aAngle=40,bAngle=107.5,rAngle=70, scale=\textwidth,rSub=0]{}
    \caption{Cross-section through the Bloch sphere showing the relations between the various vectors in the case where the choice of $\sdevr{\opa}=\|\boex\|$ fixes $\bor_0$ to be between $\boa$ and $\bob$ to minimize $\sdevr{\opb}=\|\boy\|$. Note that $\boex\cdot\boy=-\boa\cdot\bob\,\|\boex\|\|\boy\|$.}
    \label{fig:semicircle-min-r-between}
  \end{subfigure}
  \begin{subfigure}[t]{0.45\textwidth}
    \DrawSemiCircle[aAngle=40, bAngle=70, rAngle=110, scale=\textwidth,rSub=0]{}
    \caption{Cross-section through the Bloch sphere showing the relations between the various vectors in the case where the choice of $\sdevr{\opa}$ fixes $\bor_0$ to be outside $\boa$ and $\bob$ to minimize $\sdevr{\opb}$. Note that $\boex\cdot\boy=\boa\cdot\bob\,\|\boex\|\|\boy\|$.}
    \label{fig:semicircle-min-r-outside}
  \end{subfigure}
  \caption{Illustration of the $\born$ vectors which minimize $\sdevr{\opb}$ given a fixed value of $\sdevr{\opa}$}
  \label{fig:semicircle-min-plane}
\end{figure}
\begin{figure}[ht]
  \begin{subfigure}[t]{0.45\textwidth}
    \DrawSemiCircle[aAngle=80,bAngle=110,rAngle=40, scale=\textwidth,drawBPrime=true,rSub=1]{}
    \caption{Upper bound diagram. Note that $\boex\cdot\boy=\boa\cdot\bob\,\|\boex\|\|\boy\|$.}
  \end{subfigure}
  \begin{subfigure}[t]{0.45\textwidth}
    \DrawSemiCircle[aAngle=80, bAngle=110, rAngle=20, scale=\textwidth,rLength=0.85,rSub=1]{}
    \caption{Upper bound diagram version where $\sdev{\opa} \ge \boa\cdot\bob$. Here we can achieve the (trivial) upper bound of $\sdev{\opb} = 1$ since $\bor\cdot\bob=0$. Note that if one wants to have $\rho$ a pure state one can obtain this by moving perpendicularly out of the $\boa$, $\bob$ plane.}
  \end{subfigure}
  \caption{Illustration of the $\bor_1$ vectors which maximize $\sdevr{\opb}=\|\boy\|$ given a fixed  $\sdevr{\opa}=\|\boex\|$}
  \label{fig:semicircle-max-plane}
\end{figure}

Figure \ref{fig:min-max}  shows the Bloch vectors $\bor_0,\bor_1$ corresponding to the states that minimize, resp. maximize $\sdevr{\opb}$. These are unit vectors in the plane spanned by $\boa,\bob$ (except for the cases where the maximum $\sdevr{\opb}=1$ and $\sdev{\opa} > \boa\cdot\bob$). There are four constellations of interest according to distinct regions of increasing values of $\sdev{\opa}$. We determine the minimal and maximal values $\sdevmin{\opb},\sdevmax{\opb}$ in each case. To be specific, we assume $\boa\cdot\bob\ge\|\boa\times\bob\|$, that is, $\theta\le\pi/4$; the case $\theta>\pi/4$ is treated similarly. As evident from Figures \ref{fig:semicircle-min-plane} and \ref{fig:semicircle-max-plane}, we have $\|\boex\|=\sdev{\opa}$, and $\|\boy\|=\sdevmin{\opb}$, resp. $\|\boy\|=\sdevmax{\opb}$. Then repeated application of Eq.~\ref{eq:axb} and using the relation $\boex\cdot\boy=\pm\boa\cdot\bob\|\boex\|\|\boy\|$ with the appropriate choice of sign gives the following equations for $\sdevmin{\opb}$, $\sdevmax{\opb}$
\begin{align}
  \newsubeqblock
  \label{eqn:qubit-bounds-1}
  \mysubeq 0\leq\sdev{\opa} \leq\|\boa\times\bob\|\iff &1\geq \boa\cdot\bor\geq\boa\cdot\bob\\
  \mysubeq  \implies&\|\boa\times\bob\|^2=\var{\opa}+\varmin{\opb}+2\boa\cdot\bob\,\sdev{\opa}\sdevmin{\opb}\\
  \newsubeqblock
  \label{eqn:qubit-bounds-2}
  \mysubeq \|\boa\times\bob\|\leq\sdev{\opa} \leq 1\iff &\boa\cdot\bob\geq\boa\cdot\bor\geq 0\\
  \mysubeq  \implies&\|\boa\times\bob\|^2=\var{\opa}+\varmin{\opb}-2\boa\cdot\bob\,\sdev{\opa}\sdevmin{\opb}\\
  \newsubeqblock
  \label{eqn:qubit-bounds-3}
  \mysubeq 0\leq\sdev{\opa} \leq\boa\cdot\bob\iff &1\geq\boa\cdot\bor\geq\|\boa\times\bob\|\\
  \mysubeq  \implies&\|\boa\times\bob\|^2=\var{\opa}+\varmax{\opb}-2\boa\cdot\bob\,\sdev{\opa}\sdevmax{\opb}\\
  \newsubeqblock
  \label{eqn:qubit-bounds-4}
  \mysubeq \sdev{\opa}\geq\boa\cdot\bob\iff&\boa\cdot\bor\leq\|\boa\times\bob\|\\
  \mysubeq \implies&\sdevmax{\opb}=1\quad\text{at }\bor=\bor_1\ (\text{in the direction of }\bob').
\end{align}
The presence of an upper bound less than 1 for $\sdev{\opb}$ in the cases shown in \eqref{eqn:qubit-bounds-1} and \eqref{eqn:qubit-bounds-2}(i.e., $\sdev{\opa}< \boa\cdot\bob$) can be interpreted in terms of another observable $\opb^\prime = \bob^\prime\cdot\bosig$ where $\bob^\prime$ is the unit vector, in the plane spanned by $\boa$ and $\bob$, orthogonal to $\bob$. With this definition we have $\varr\opb = 1-\varr{\opb^\prime}$ so the lower bound on the uncertainty  $\sdev{\opb^\prime}$ (due to its trade-off with $\sdev{\opa}$) imposes an upper bound on $\sdev{\opb}$.

By solving the various quadratic equations  we obtain the following relation which defines, exactly, the allowed uncertainty region. In particular we can achieve our aim of giving a closed form for the minimum and maximum values of $\sdevr{\opb}$  given a fixed $\sdev{\opa}$. Note that the resulting tight uncertainty relation is state-independent in the sense described above: the bounds for $\varr{\opb}$ depend only on the chosen $\var{\opa}$ and the observables $\opa$ and $\opb$.
\begin{thm}\label{thm:ab-pur}\rm
  Given a pair of qubit observables  $\opa=\boa\cdot\bosig$, $\opb=\bob\cdot\bosig$ as well as a fixed uncertainty $\sdevr{\opa}=\sdev{\opa}$ we have
  \begin{align}
    \left|\sdev{\opa} (\boa\cdot\bob) - \|\boa\times\bob\|\sqrt{1-\var \opa}\right|
    \leq \sdevr{\opb} 
                            \leq\begin{cases}
                              \sdev{\opa} (\boa\cdot\bob) + \|\boa\times\bob\|\sqrt{1-\var \opa} & \text{\normalfont if } \sdev{\opa} < \boa\cdot\bob\\
                              1 & \text{\normalfont otherwise}
                            \end{cases}.
  \end{align}
\end{thm}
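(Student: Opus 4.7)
The plan is to solve the three quadratic equations \eqref{eqn:qubit-bounds-1}, \eqref{eqn:qubit-bounds-2} and \eqref{eqn:qubit-bounds-3} (derived geometrically via the Lemma and Figures \ref{fig:min-max}--\ref{fig:semicircle-max-plane}) for $\sdevmin{\opb}$ and $\sdevmax{\opb}$, and then to verify that the resulting piecewise formulas glue together continuously to produce the single closed-form statement of the theorem.

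Writing $c := \boa\cdot\bob$ and $s := \|\boa\times\bob\|$ (so $c^2+s^2=1$), each of the three relations takes the form of a quadratic
\begin{equation*}
  t^2 \pm 2c\sdev{\opa}\,t + \var{\opa} - s^2 = 0
\end{equation*}
in the unknown $t\in\{\sdevmin{\opb},\sdevmax{\opb}\}$. A short calculation gives discriminant $4s^2(1-\var{\opa})\ge 0$, so the roots are $\mp c\sdev{\opa} \pm s\sqrt{1-\var{\opa}}$. This is the only real calculation the proof requires.

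The remaining work is sign bookkeeping. For the lower bound: in regime \eqref{eqn:qubit-bounds-1} with linear coefficient $+2c\sdev{\opa}$ the only non-negative root is $-c\sdev{\opa}+s\sqrt{1-\var{\opa}}$, and it is $\ge 0$ precisely when $\sdev{\opa}\le s$; in regime \eqref{eqn:qubit-bounds-2} with linear coefficient $-2c\sdev{\opa}$ the smaller non-negative root is $c\sdev{\opa}-s\sqrt{1-\var{\opa}}$, with the condition $\sdev{\opa}\ge s$ ensuring non-negativity. Both expressions vanish and coincide at $\sdev{\opa}=s$, so they combine into the absolute value $\bigl|c\sdev{\opa}-s\sqrt{1-\var{\opa}}\bigr|$ asserted in the theorem. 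For the upper bound in regime \eqref{eqn:qubit-bounds-3} ($\sdev{\opa}\le c$) take the larger root $c\sdev{\opa}+s\sqrt{1-\var{\opa}}$; this equals $c^2+s^2 = 1$ exactly at $\sdev{\opa}=c$, matching continuously the trivial bound $1$ of regime \eqref{eqn:qubit-bounds-4}. Attainability of every intermediate value of $\sdevr{\opb}$ between these bounds then follows by continuity as the Bloch vector $\bor$ is varied over the compact disc $\{\bor : \boa\cdot\bor = \sqrt{1-\var{\opa}}\}$.

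The main obstacle is purely the sign bookkeeping: choosing the correct root of the quadratic in each regime and confirming that the piecewise lower and upper bound formulas transition continuously at the two critical values $\sdev{\opa}=s$ and $\sdev{\opa}=c$. No new conceptual input is required beyond the geometric content already encoded in the Lemma and the case analysis \eqref{eqn:qubit-bounds-1}--\eqref{eqn:qubit-bounds-4}.
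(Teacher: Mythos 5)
Your proposal is correct and takes essentially the same route as the paper: the paper's proof of Theorem~\ref{thm:ab-pur} is precisely to solve the quadratics in \eqref{eqn:qubit-bounds-1}--\eqref{eqn:qubit-bounds-3} arising from the geometric case analysis and read off the stated bounds. Your explicit discriminant computation, root selection, the matching of the branches at $\sdev{\opa}=\|\boa\times\bob\|$ and $\sdev{\opa}=\boa\cdot\bob$, and the intermediate-value argument for attainability simply spell out the sign bookkeeping the paper leaves implicit.
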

The resulting uncertainty region is plotted in Fig.~\ref{fig:top-level-err-region}.

\afterpage{\clearpage}

\begin{figure*}[ht]
  \begin{subfigure}[b]{0.4\textwidth}
    \includegraphics[width=\textwidth]{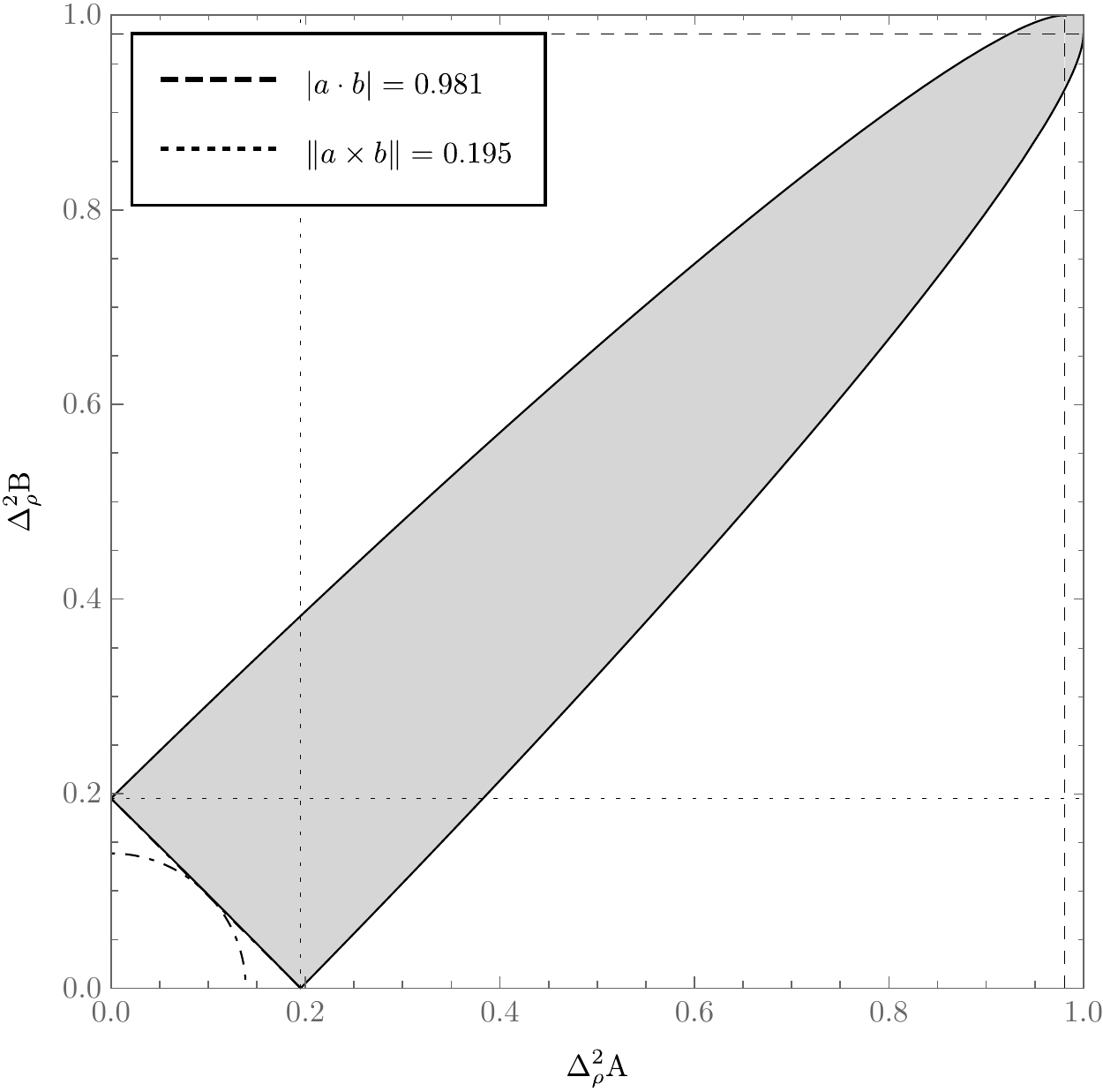}
    \caption{The uncertainty region with $\boa\cdot\bob = \cos\frac{\pi}{16}$.}
    \label{fig:err-region-pi-by-16}
  \end{subfigure}
  \begin{subfigure}[b]{0.4\textwidth}
    \includegraphics[width=\textwidth]{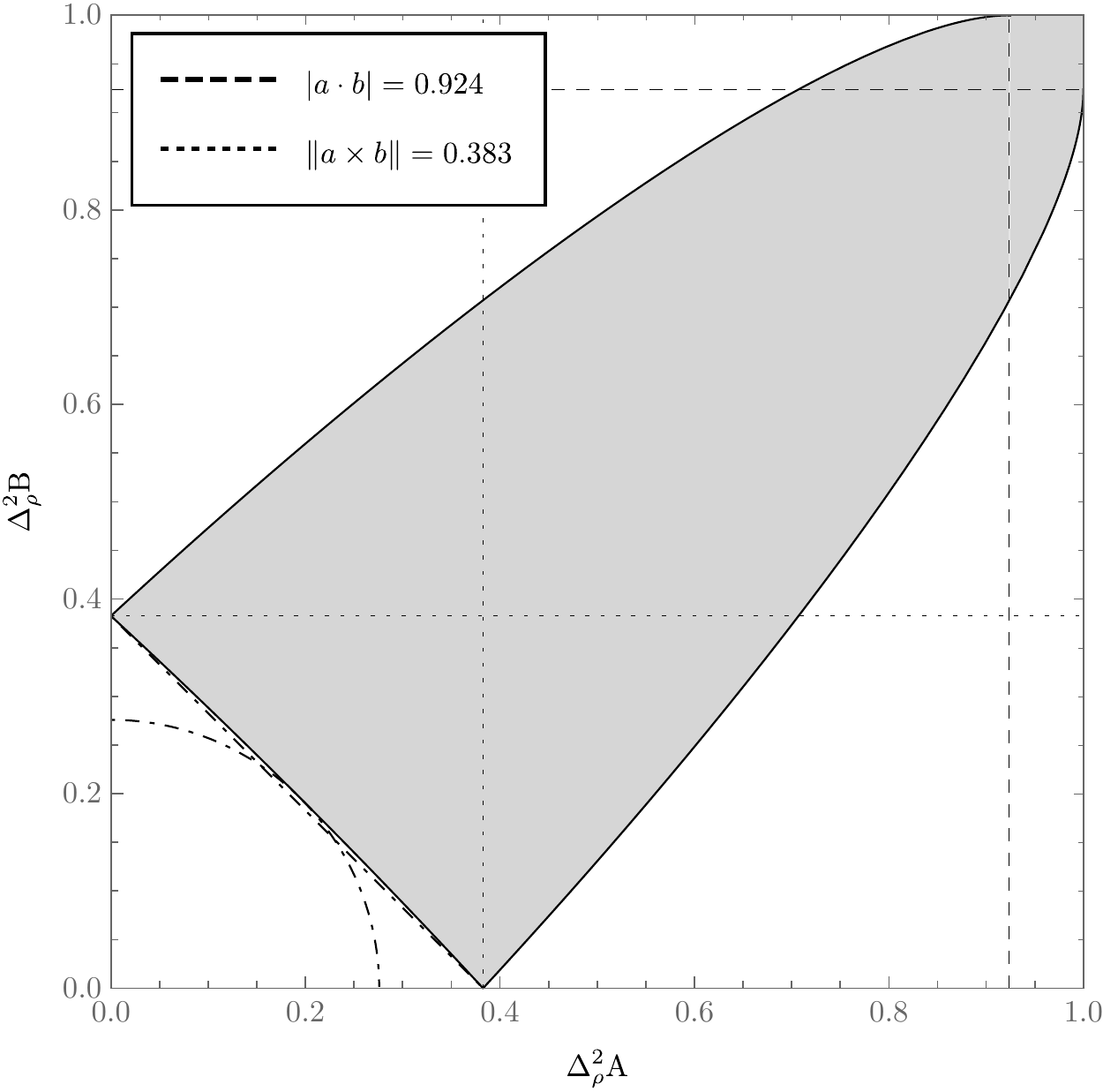}
    \caption{The uncertainty region with $\boa\cdot\bob = \cos\frac{\pi}{8}$.}
    \label{fig:err-region-pi-by-8}
  \end{subfigure}
  \begin{subfigure}[b]{0.4\textwidth}
    \includegraphics[width=\textwidth]{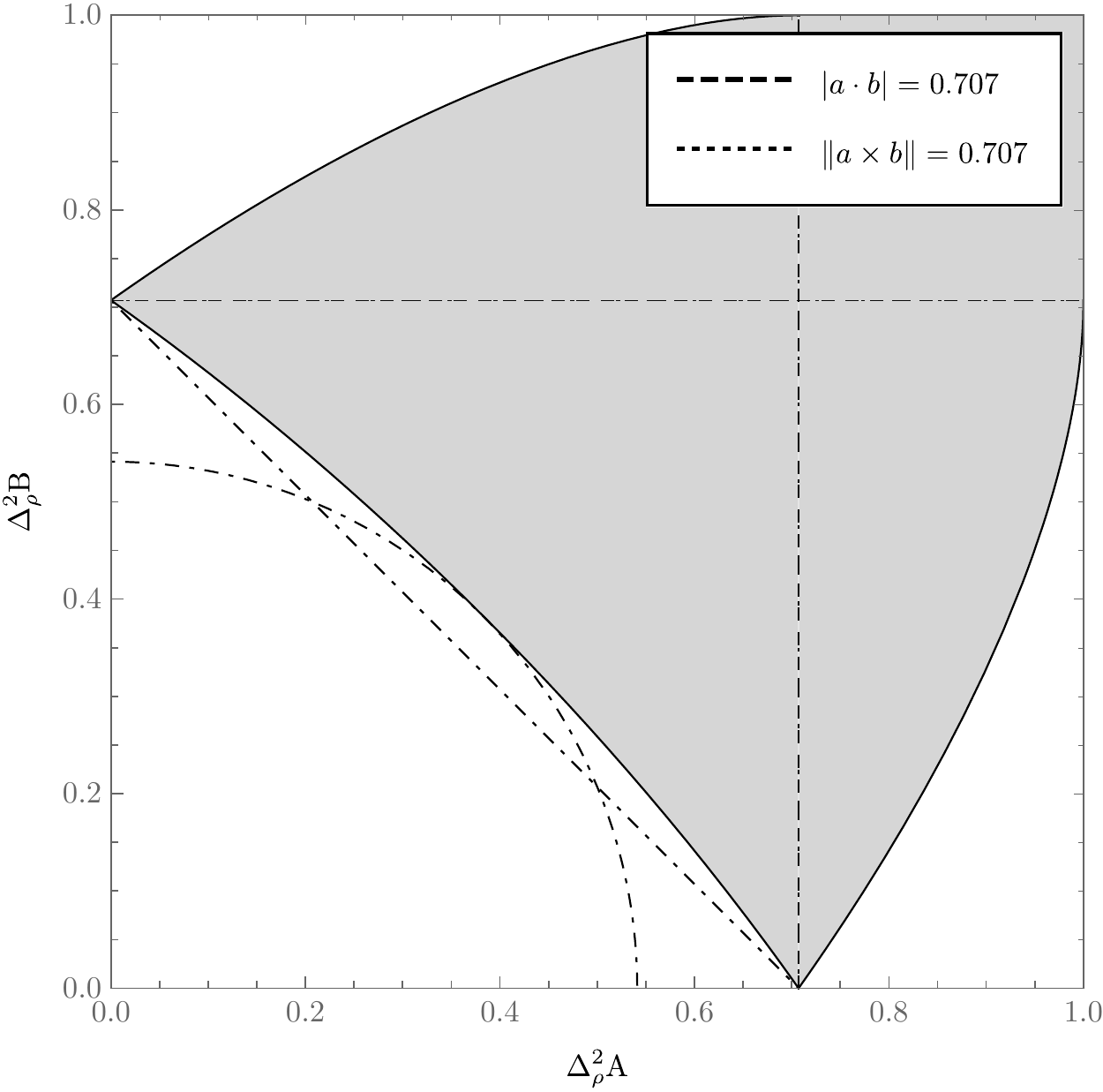}
    \caption{The uncertainty region with $\boa\cdot\bob = \cos\frac{\pi}{4}$.}
    \label{fig:err-region-pi-by-4}
  \end{subfigure}
  \begin{subfigure}[b]{0.4\textwidth}
    \includegraphics[width=\textwidth]{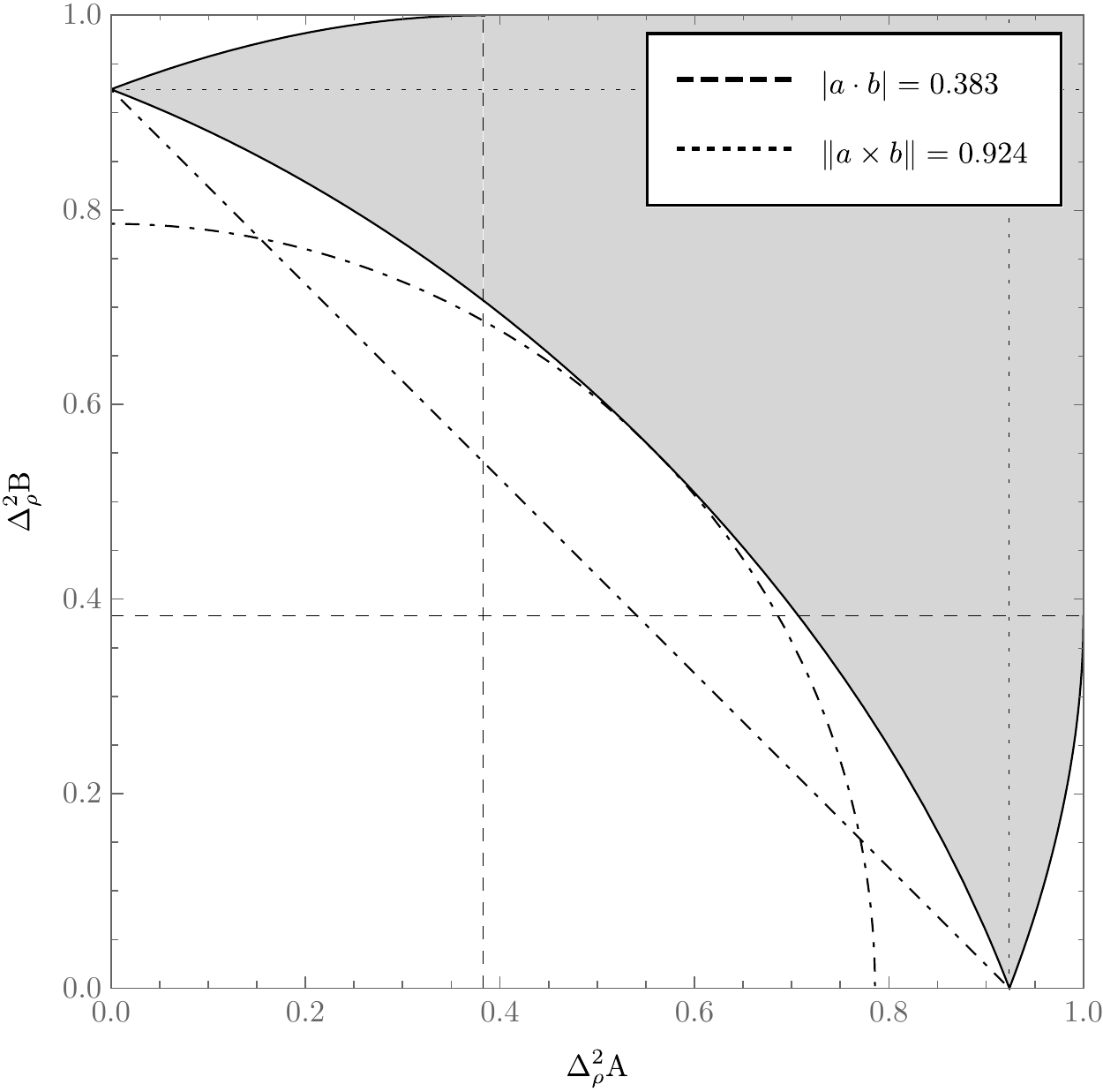}
    \caption{The uncertainty region with $\boa\cdot\bob = \cos\frac{3\pi}{8}$.}
    \label{fig:err-region-3pi-by-8}
  \end{subfigure}
  \begin{subfigure}[b]{0.4\textwidth}
    \includegraphics[width=\textwidth]{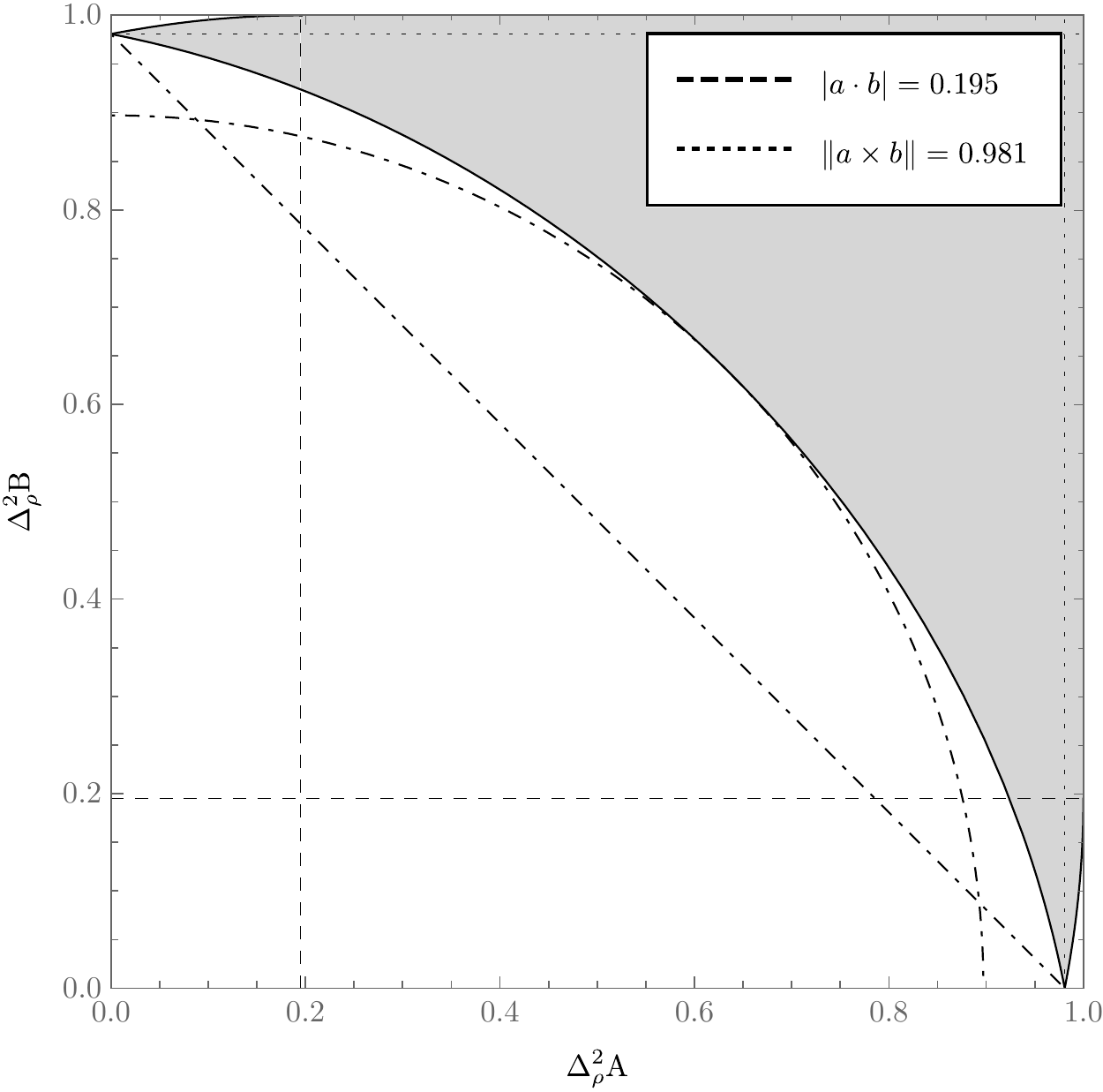}
    \caption{The uncertainty region with $\boa\cdot\bob = \cos\frac{7\pi}{16}$.}
    \label{fig:err-region-7pi-by-16}
  \end{subfigure}
  \begin{subfigure}[b]{0.4\textwidth}
    \includegraphics[width=\textwidth]{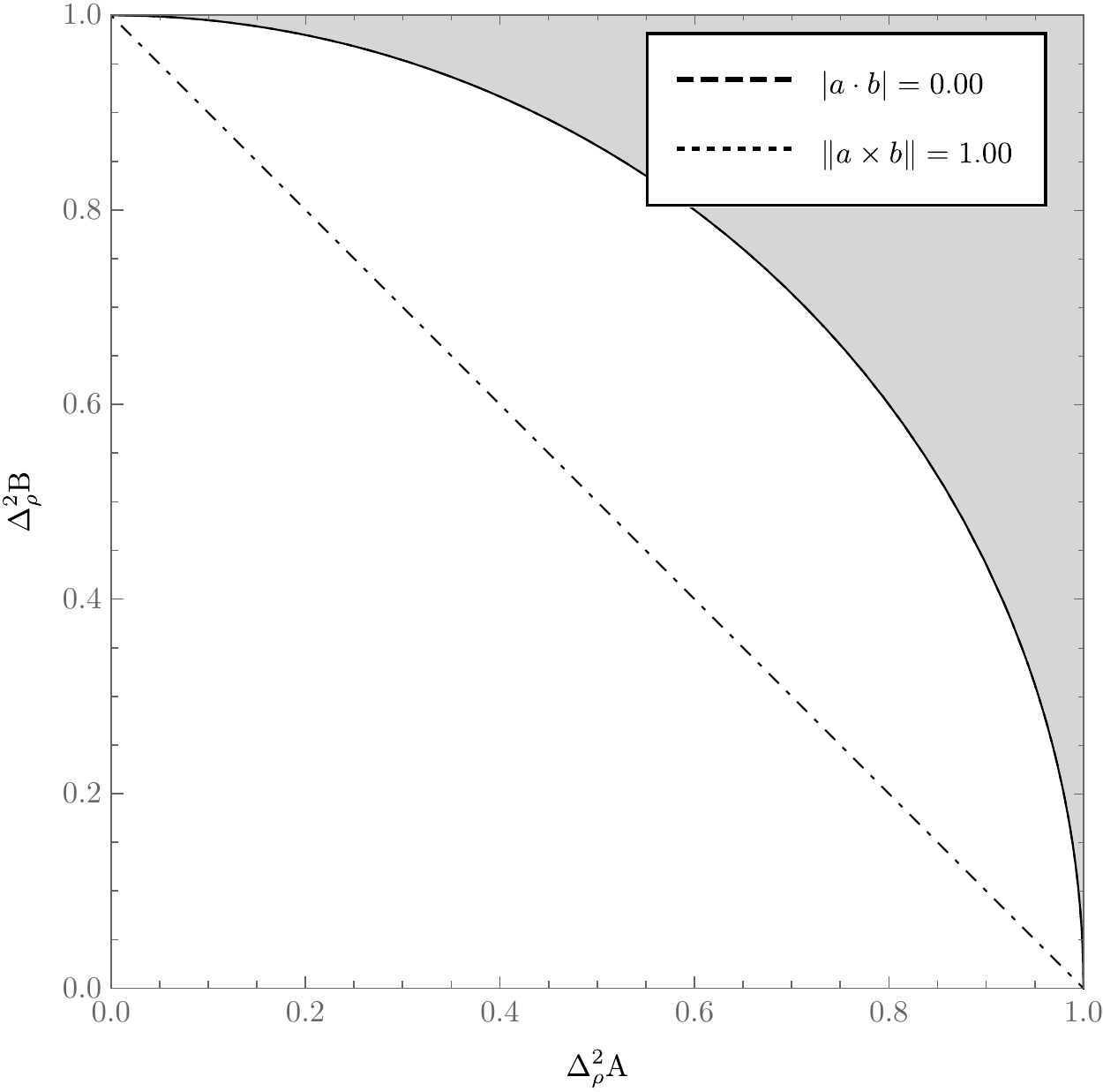}
    \caption{The uncertainty region with $\boa\cdot\bob = \cos\frac{\pi}{2}$.}
    \label{fig:err-region-pi-by-2}
  \end{subfigure}
  \caption{Plots of the uncertainty region for sharp, $\pm1$-valued qubit observables. The straight and curved dot-dashed lines are the previously known lower bounds \eqref{eqn:straight-lower-bound-blw} and \eqref{eqn:curved-lower-bound-blw}, respectively.}
  \label{fig:top-level-err-region}
\end{figure*}

\subsection{Schr\"odinger uncertainty relation}\label{sec:sch-uncertainty-relation}

We now turn to a brief analysis of the Schr\"odinger inequality, beginning with the following observation.
\begin{lem}
The identity \eqref{eq:r2-id} for unit vectors $\boa,\bob\in\mathbb{R}^3$ and any vector $\bor\in\mathbb{R}^3$can be rewritten in the following two equivalent forms:
\begin{align}
 & \left\|\boa\times\bob\right\|^2 \left\|\bor\right\|^2 = \bigl((\boa\times\bob)\cdot\bor\bigr)^2 + \bigl\|(\boa\times\bob)\times\bor\bigr\|^2\tag{17}\\
 &\iff  \left(1-(\boa\cdot\bor)^2\right)+\left(1-(\bob\cdot\bor)^2\right)+\left(\|\boa\times\bob\|^2-(\boa\times\bob\cdot\bor)^2\right)=\|\boa\times\bob\|^2\bigl(1-\|\bor\|^2\bigr)+2(1-\boa\cdot\bob\,\boa\cdot\bor\,\bob\cdot\bor)\label{eq:r2-id2}\\
 &\iff \bigl(1-(\boa\cdot\bor)^2\bigr)\bigl(1-(\bob\cdot\bor)^2\bigr)-
    \left((\boa\times\bob\cdot\bor)^2+(\boa\cdot\bob-\boa\cdot\bor\,\bob\cdot\bor)^2\right)=\|\boa\times\bob\|^2\bigl(1-\|\bor\|^2\bigr).\label{eq:r2-id3}
\end{align}
\end{lem}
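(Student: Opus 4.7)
The plan is to reduce all three identities to the same polynomial expression in the pairwise inner products $\boa\cdot\bob$, $\boa\cdot\bor$, $\bob\cdot\bor$, $(\boa\times\bob)\cdot\bor$ and $\|\bor\|^2$, using only two elementary vector identities. First I would record the two ingredients that do all the work:
\begin{align}
 \|\boa\times\bob\|^2 &= 1-(\boa\cdot\bob)^2, \\
 (\boa\times\bob)\times\bor &= \bob(\boa\cdot\bor)-\boa(\bob\cdot\bor),
\end{align}
the first holding because $\boa,\bob$ are unit vectors and the second being the standard BAC--CAB identity. Squaring the latter and again using $\|\boa\|=\|\bob\|=1$ yields
\begin{equation}
 \bigl\|(\boa\times\bob)\times\bor\bigr\|^2 = (\boa\cdot\bor)^2+(\bob\cdot\bor)^2-2(\boa\cdot\bob)(\boa\cdot\bor)(\bob\cdot\bor).
\end{equation}

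Next I would prove \eqref{eq:r2-id2}. Starting from \eqref{eq:r2-id}, substitute the expansion of $\|(\boa\times\bob)\times\bor\|^2$ just obtained and solve for $((\boa\times\bob)\cdot\bor)^2$; then plug this expression into the left-hand side of \eqref{eq:r2-id2}. After cancelling the $(\boa\cdot\bor)^2$ and $(\bob\cdot\bor)^2$ terms that appear with opposite signs, the remaining terms regroup into $\|\boa\times\bob\|^2(1-\|\bor\|^2)+2(1-\boa\cdot\bob\,\boa\cdot\bor\,\bob\cdot\bor)$, which is precisely the right-hand side. This is a straightforward one-page computation whose only non-obvious step is recognising the constant $2$ as the leftover from $1+1$ contributed by the two $V_a$, $V_b$ factors.

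For \eqref{eq:r2-id3} I would work at the product level rather than the sum level. Expanding
\begin{equation}
 \bigl(1-(\boa\cdot\bor)^2\bigr)\bigl(1-(\bob\cdot\bor)^2\bigr) - \bigl(\boa\cdot\bob-\boa\cdot\bor\,\bob\cdot\bor\bigr)^2
\end{equation}
the $(\boa\cdot\bor)^2(\bob\cdot\bor)^2$ and $2(\boa\cdot\bob)(\boa\cdot\bor)(\bob\cdot\bor)$ terms cancel, leaving exactly $\|\boa\times\bob\|^2-\|(\boa\times\bob)\times\bor\|^2$ by the formula for $\|(\boa\times\bob)\times\bor\|^2$ above. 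Applying \eqref{eq:r2-id} to replace $\|(\boa\times\bob)\times\bor\|^2$ by $\|\boa\times\bob\|^2\|\bor\|^2-((\boa\times\bob)\cdot\bor)^2$ and rearranging yields \eqref{eq:r2-id3}.

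The hard part, such as it is, is bookkeeping rather than any conceptual difficulty: one must keep track of the signs produced by the BAC--CAB rule and ensure that the Schr\"odinger-like grouping $(\boa\cdot\bob-\boa\cdot\bor\,\bob\cdot\bor)^2$ is introduced at the right moment so that the $(\boa\cdot\bor)^2(\bob\cdot\bor)^2$ cross-term cancels cleanly. Since each transition uses only the two listed identities and elementary algebra, the equivalences \eqref{eq:r2-id}$\Leftrightarrow$\eqref{eq:r2-id2}$\Leftrightarrow$\eqref{eq:r2-id3} follow immediately; the real payoff is the form \eqref{eq:r2-id3}, whose left-hand side is precisely the Schr\"odinger inequality for $\opa=\boa\cdot\bosig$ and $\opb=\bob\cdot\bosig$ once one notes $\varr{\opa}=1-(\boa\cdot\bor)^2$, $\expnr{\tfrac12(\opa\opb+\opb\opa)}=\boa\cdot\bob$, and $\tfrac12|\expnr{[\opa,\opb]}|=|(\boa\times\bob)\cdot\bor|$, setting up the analysis to follow.
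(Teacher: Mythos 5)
Your proposal is correct and follows essentially the same route as the paper: both rest on the Lagrange (BAC--CAB) expansion $\bigl\|(\boa\times\bob)\times\bor\bigr\|^2=(\boa\cdot\bor)^2+(\bob\cdot\bor)^2-2(\boa\cdot\bob)(\boa\cdot\bor)(\bob\cdot\bor)$ together with $\|\boa\times\bob\|^2=1-(\boa\cdot\bob)^2$, and then purely algebraic regrouping. The only difference is bookkeeping: you link \eqref{eq:r2-id} directly to each of \eqref{eq:r2-id2} and \eqref{eq:r2-id3}, while the paper chains \eqref{eq:r2-id}$\iff$\eqref{eq:r2-id2} and then \eqref{eq:r2-id3}$\iff$\eqref{eq:r2-id2}; the content is the same.
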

\begin{proof}
Recall the identity based on the Lagrange formula for the double vector product,
  \begin{align}
    \left\|(\boa\times\bob)\times\bor\right\|^2 = \left\|\boa(\bob\cdot\bor) - \bob(\boa\cdot\bor) \right\|^2
    = (\boa\cdot\bor)^2 + (\bob\cdot\bor)^2 - 2(\boa\cdot\bob)\,(\boa\cdot\bor)\,(\bob\cdot\bor).
  \end{align}
  We use this to rewrite \eqref{eq:r2-id} as follows:
  \begin{align*}
    \|\boa\times\bob\|^2\|\bor\|^2&=(\boa\times\bob\cdot\bor)^2+(\boa\cdot\bor)^2 + (\bob\cdot\bor)^2 - 2\boa\cdot\bob\,\boa\cdot\bor\,\bob\cdot\bor)\\
                                  &=\|\boa\times\bob\|^2-\left(\|\boa\times\bob\|^2-(\boa\times\bob\cdot\bor)^2\right)+1-\left(1-(\boa\cdot\bor)^2\right)+1-\left(1-(\bob\cdot\bor)^2\right)-2 (\boa\cdot\bob)\,(\boa\cdot\bor)\,(\bob\cdot\bor)                                    
  \end{align*}
  Upon rearranging terms, we obtain \eqref{eq:r2-id2}, showing at once its equivalence with \eqref{eq:r2-id}.
  
  Next, working on the left hand side of \eqref{eq:r2-id3}, we obtain:
 \begin{align}
 &\bigl(1-(\boa\cdot\bor)^2\bigr)\bigl(1-(\bob\cdot\bor)^2\bigr)-
    \left((\boa\times\bob\cdot\bor)^2+(\boa\cdot\bob-\boa\cdot\bor\,\bob\cdot\bor)^2\right)\\
    &\quad=1-(\boa\cdot\bor)^2-(\bob\cdot\bor)^2+(\boa\cdot\bor)^2(\bob\cdot\bor)^2-(\boa\times\bob\cdot\bor)^2-(\boa\cdot\bob)^2-(\boa\cdot\bor)^2(\bob\cdot\bor)^2
    +2\boa\cdot\bob\,\boa\cdot\bor\,\bob\cdot\bor\\
    &\quad=\left(1-(\boa\cdot\bor)^2\right)+\left(1-(\bob\cdot\bor)^2\right)-1+\left(\|\boa\times\bob\|^2-(\boa\times\bob\cdot\bor)^2\right)-\left(\|\boa\times\bob\|^2+(\boa\cdot\bob)^2\right)
    +2\boa\cdot\bob\,\boa\cdot\bor\,\bob\cdot\bor\\
    &\quad=\left(1-(\boa\cdot\bor)^2\right)+\left(1-(\bob\cdot\bor)^2\right)+\left(\|\boa\times\bob\|^2-(\boa\times\bob\cdot\bor)^2\right)-2(1-\boa\cdot\bob\,\boa\cdot\bor\,\bob\cdot\bor)=: g(\boa,\bob,\bor)
 \end{align}
 Equating this with the right hand side, we see that \eqref{eq:r2-id3} implies \eqref{eq:r2-id2}.
 
 Conversely, we may use \eqref{eq:r2-id2} to see that $g(\boa,\bob,\bor)$ is actually equal $\|\boa\times\bob\|^2\bigl(1-\|\bor\|^2\bigr)$, which shows that \eqref{eq:r2-id2} implies
 \eqref{eq:r2-id3}.
\end{proof}
We recall that
for any qubit state $\rho = \frac{1}{2}\left(\operatorname{I} + \bor\cdot\bosig\right)$ we have
\begin{align}
  \varr{\opa} &= 1-\left(\boa\cdot\bor\right)^2,\quad  \varr{\opb} = 1-\left(\bob\cdot\bor\right)^2,\\
  \bigl|\expr{\com{\opa}{\opb}}\bigr| &= 2\left|\left(\boa\times\bob\right)\cdot\bor\right|,\\
  \expr{\acom{\opa}{\opb}} &= 2(\boa\cdot\bob),\quad  \expr{\opa} = \boa\cdot\bor,\quad \expr{\opb} = \bob\cdot\bor.
\end{align}
Further, we note that the variance of the observable $\opc=\boa\times\bob\cdot\bosig$ is $\varr\opc=\|\boa\times\bob\|^2-(\boa\times\bob\cdot\bor)^2$.
This can be used to translate the above identities into two equivalent forms of uncertainty equations.

\begin{thm}\label{thm:3ur}\rm
  The observables $\opa=\boa\cdot\bosig$, $\opb=\bob\cdot\bosig$, and $\opc=\boa\times\bob\cdot\bosig$ obey the following equivalent uncertainty equations for all states $\rho=\frac12(I+\bor\cdot\bosig)$:
  \begin{equation}\label{eq:triple-ur}
    \varr\opa+\varr\opb+\varr\opc=\|\boa\times\bob\|^2\bigl(1-\|\bor\|^2\bigr)+2(1-\boa\cdot\bob\,\boa\cdot\bor\,\bob\cdot\bor),
  \end{equation}
\begin{equation}\label{eq:S-eq}
  \varr\opa\,\varr\opb-\left[\frac14 \bigl|\expr{\com{\opa}{\opb}}\bigr|^2+\frac14 \bigl(\expr{\acom{\opa}{\opb}}-2\expr{\opa}\expr{\opb}\bigr)^2\right]=\|\boa\times\bob\|^2\bigl(1-\|\bor\|^2\bigr).
\end{equation}
\end{thm}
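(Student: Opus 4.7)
The plan is to observe that Theorem \ref{thm:3ur} is, at heart, a dictionary-translation of the preceding Lemma's two equivalent reformulations of Pythagoras' identity \eqref{eq:r2-id2} and \eqref{eq:r2-id3}. All the algebraic work has essentially already been done; what remains is to match each geometric quantity appearing in those identities with its quantum-mechanical counterpart using the list of single-observable formulas that is stated immediately before the theorem. So I would open the proof by recalling that list: $\varr{\opa}=1-(\boa\cdot\bor)^2$, $\varr{\opb}=1-(\bob\cdot\bor)^2$, $\expr{\opa}\expr{\opb}=(\boa\cdot\bor)(\bob\cdot\bor)$, $\tfrac12\expr{\acom{\opa}{\opb}}=\boa\cdot\bob$, $\tfrac12|\expr{\com{\opa}{\opb}}|=|(\boa\times\bob)\cdot\bor|$, and the crucial observation that the variance of $\opc=\boa\times\bob\cdot\bosig$ is $\varr\opc=\|\boa\times\bob\|^2-(\boa\times\bob\cdot\bor)^2$, since $\opc^2=\|\boa\times\bob\|^2\opi$ and $\expr{\opc}=(\boa\times\bob)\cdot\bor$.

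Next I would derive \eqref{eq:triple-ur} by reading off the left-hand side of \eqref{eq:r2-id2}: each of the three parenthesized terms is exactly one of $\varr\opa$, $\varr\opb$, $\varr\opc$, so the sum of variances equals the right-hand side of \eqref{eq:r2-id2} verbatim. This gives \eqref{eq:triple-ur} immediately, for every state $\rho=\tfrac12(\opi+\bor\cdot\bosig)$.

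To obtain \eqref{eq:S-eq}, I would apply the same substitutions to \eqref{eq:r2-id3}. The first factor on the left, $\bigl(1-(\boa\cdot\bor)^2\bigr)\bigl(1-(\bob\cdot\bor)^2\bigr)$, is $\varr\opa\,\varr\opb$. The bracketed subtracted term is $(\boa\times\bob\cdot\bor)^2+(\boa\cdot\bob-\boa\cdot\bor\,\bob\cdot\bor)^2$, which, by the identities above, equals $\tfrac14|\expr{\com\opa\opb}|^2+\tfrac14\bigl(\expr{\acom\opa\opb}-2\expr\opa\expr\opb\bigr)^2$. Thus the left-hand side of \eqref{eq:S-eq} is literally the left-hand side of \eqref{eq:r2-id3}, and the right-hand sides already agree.

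Finally, I would note that the equivalence of \eqref{eq:triple-ur} and \eqref{eq:S-eq} is inherited from the equivalence of \eqref{eq:r2-id2} and \eqref{eq:r2-id3}, which the preceding Lemma has already established. I don't expect any genuine obstacle: the only mildly non-routine step is confirming that $\varr\opc$ has the form $\|\boa\times\bob\|^2-(\boa\times\bob\cdot\bor)^2$ (noting that $\opc$ is \emph{not} normalized to a unit Bloch vector, so $\opc^2=\|\boa\times\bob\|^2\opi$ rather than $\opi$, and the variance formula picks up the factor $\|\boa\times\bob\|^2$ accordingly). Once this is pointed out, the rest is a transparent substitution.
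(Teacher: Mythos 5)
Your proposal is correct and matches the paper's own route: the paper proves the two equivalent forms \eqref{eq:r2-id2}--\eqref{eq:r2-id3} of the Pythagoras identity in the preceding Lemma and then obtains Theorem \ref{thm:3ur} by exactly the substitution dictionary you list, including the observation that $\varr\opc=\|\boa\times\bob\|^2-(\boa\times\bob\cdot\bor)^2$ because $\opc^2=\|\boa\times\bob\|^2\opi$. Nothing is missing.
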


This yields, in particular, the Schr\"odinger inequality \eqref{eq:SUR}.

The Schr\"odinger inequality does not have the form of a  state-independent uncertainty relation, except in the case $\boa\cdot\bob=0$ (treated in Subsection \ref{sec:xyz}).
Nevertheless, it does provide a specification of the lower boundary of the uncertainty relation. The upper boundary is obtained  by appliciation of the full equation \eqref{eq:S-eq}.
\begin{cor}\rm
  The upper and lower boundary value of each vertical segment  $\bigl\{(\sdev{\opa},\sdevr{\opb})\,|\, \rho\in\si(\sdev{\opa})\bigr\}$ of the uncertainty region for $\opa=\boa\cdot\bosig,\opb=\bob\cdot\bosig$ is determined by the Schr\"odinger bound
  \begin{equation}
    S(\opa,\opb,\rho)=\frac14 \bigl|\expr{\com{\opa}{\opb}}\bigr|^2+\frac14 \bigl(\expr{\acom{\opa}{\opb}}-2\expr{\opa}\expr{\opb}\bigr)^2
  \end{equation}
  as follows: 
  \begin{alignat}{2} 
    \varmin{\opb}&=\min&&\left\{\frac{S(\opa,\opb,\rho)}{\var{\opa}}\,\middle|\,\rho\in\si(\sdev{\opa})\right\},\\
    \varmax{\opb}&=\max&&\left\{\frac{S(\opa,\opb,\rho)}{\var{\opa}}\,\middle|\,\rho\in\si(\sdev{\opa})\right\}.
  \end{alignat}
\end{cor}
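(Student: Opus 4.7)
The plan is to deduce the corollary directly from Theorem \ref{thm:3ur}, specifically from the identity \eqref{eq:S-eq}. First, I would rewrite that identity as
\[
  \varr{\opb} \;=\; \frac{S(\opa,\opb,\rho)}{\var{\opa}} \;+\; \frac{\|\boa\times\bob\|^2\bigl(1-\|\bor\|^2\bigr)}{\var{\opa}},
\]
valid for every $\rho\in\si(\sdev{\opa})$ with $\var{\opa}>0$ (the degenerate case $\var{\opa}=0$ forces $\rho$ to be an eigenstate of $\opa$ and the statement is trivial). The correction on the right is non-negative and vanishes precisely on pure states, so on pure $\rho\in\si(\sdev{\opa})$ the Schr\"odinger bound is saturated.

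Second, I would show that the extrema of $\varr{\opb}$ over $\si(\sdev{\opa})$ are attained at pure states, so the correction drops out at the extremisers. Geometrically, $\si(\sdev{\opa})$ corresponds, at the level of Bloch vectors, to the union of the two parallel disks $\boa\cdot\bor=\pm\sqrt{1-\var{\opa}}$ in the Bloch ball. Since $\varr{\opb}=1-(\bob\cdot\bor)^2$ is concave in $\bor$, its minimum on each disk is attained on the bounding unit circle---precisely the $\bor_0$ configurations in Figures~\ref{fig:min-max} and \ref{fig:semicircle-min-plane}. For its maximum, either the interior value $\varr{\opb}=1$ is achieved (when $\sdev{\opa}\ge\boa\cdot\bob$, by choosing a Bloch vector with a component perpendicular to the $\boa,\bob$-plane and extending it to unit norm), or else the maximum again lies on the boundary, at the $\bor_1$ of Figure~\ref{fig:semicircle-max-plane}.

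Third, at these pure extremisers the correction vanishes and the displayed identity reduces to $\varr{\opb}=S(\opa,\opb,\rho)/\var{\opa}$, which delivers the claimed formulas for $\varmin{\opb}$ and $\varmax{\opb}$. The main obstacle I anticipate is the technical verification that the extrema of $S/\var{\opa}$ over \emph{all} of $\si(\sdev{\opa})$---including mixed states---are attained at these same pure-state configurations. A priori, for a mixed $\rho\in\si(\sdev{\opa})$ the ratio $S/\var{\opa}$ can be strictly smaller than $\varr{\opb}$ thanks to the positive correction, so one must rule out the possibility that a mixed state drives $S/\var{\opa}$ below the pure-state minimum. Dispatching this point relies on the explicit Bloch-disk analysis already used to establish Theorem~\ref{thm:ab-pur}, which locates the extremising directions as unit vectors in (or derived from) the plane spanned by $\boa$ and $\bob$, and is where the substantive work of the proof sits.
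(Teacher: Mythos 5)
Your route is the same as the paper's (the identity \eqref{eq:S-eq} combined with purity of the extremising states), and the parts you actually carry out are sound: \eqref{eq:S-eq} gives $S(\opa,\opb,\rho)/\var{\opa}=\varr{\opb}-\|\boa\times\bob\|^2\bigl(1-\|\bor\|^2\bigr)/\var{\opa}\le\varr{\opb}$ on all of $\si(\sdev{\opa})$, with equality at pure states, which yields the formula for $\varmax{\opb}$ once the maximiser of $\varr{\opb}$ is chosen pure, and also gives $\min\{S/\var{\opa}\}\le\varmin{\opb}$. The gap is exactly the step you postpone, and it is not a technical verification that the Bloch-disk analysis behind Theorem~\ref{thm:ab-pur} can ``dispatch'': the reverse inequality $S(\opa,\opb,\rho)/\var{\opa}\ge\varmin{\opb}$ for \emph{all} (including mixed) $\rho\in\si(\sdev{\opa})$ is simply false whenever $0<\sdev{\opa}<\|\boa\times\bob\|$. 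Concretely, take $\boa\perp\bob$, fix $0<\var{\opa}<1$, and consider $\rho=\frac12\bigl(\opi+\sqrt{1-\var{\opa}}\,\boa\cdot\bosig\bigr)$: this state lies in $\si(\sdev{\opa})$, its Bloch vector lies in the plane of $\boa,\bob$ (so the commutator term of $S$ vanishes), and $\boa\cdot\bob-(\boa\cdot\bor)(\bob\cdot\bor)=0$ (so the covariance term vanishes too); hence $S/\var{\opa}=0$ while $\varmin{\opb}=1-\var{\opa}>0$. More generally, for any $\boa,\bob$ with $\sdev{\opa}<\|\boa\times\bob\|$ one can pick an in-plane Bloch vector of norm strictly less than $1$ that kills both terms of $S$, so mixed states really do drive the ratio below the pure-state minimum, all the way to zero.

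The repair is to restrict the minimisation to pure states: on unit Bloch vectors the defect term in \eqref{eq:S-eq} vanishes identically, so there $S/\var{\opa}=\varr{\opb}$, and since the minimiser and maximiser of $\varr{\opb}$ on $\si(\sdev{\opa})$ can be chosen pure (your concavity argument, or the analysis behind Theorem~\ref{thm:ab-pur}), both displayed formulas follow at once for the pure-state optimisations. This is how the paper's own one-sentence proof is to be read --- ``the maximizing and minimizing states can be chosen to be pure'' --- and the paper itself later signals the restriction (``the lower uncertainty bound (which is always assumed on unit vectors, so that the above corollary remains applicable)''). So your maximum direction stands as written, but your plan for the minimum cannot be completed for the unrestricted minimum over $\si(\sdev{\opa})$; you should either prove the pure-state version or note that the unrestricted version of the minimum formula holds only when $\sdev{\opa}\ge\|\boa\times\bob\|$.
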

\begin{proof}
  This is a direct consequence of Eq.~\eqref{eq:S-eq} and the fact that the maximizing and minimizing states can be chosen to be pure.
\end{proof}
Thus we find that the strengthening \eqref{eq:S-eq} of the Schr\"odinger inequality into an equation determines the uncertainty region for $\boa\cdot\bosig,\ \bob\cdot\bosig$. However, the Schr\"odinger inequality itself gives the lower bound for $\sdevr{\opb}$ given $\sdev{\opa}$, and similarly the lower bound for $\sdev{\opa}$ given $\sdevr{\opb}$. Since the uncertainty region is symmetric under reflection on the axis $\sdev{\opa}=\sdev{\opb}$, the minimal boundaries for the two uncertainties together, obtained by the Schr\"odinger inequality alone, determine the uncertainty region.

An unexpected feature becomes apparent in the case of minimal uncertainty. Note that one may always move the vector $\bor$ into the plane spanned by $\boa$ and $\bob$ without changing the variances $\varr{\opa}$ and $\varr{\opb}$. Since $\bor$ is then perpendicular to $\boa\times\bob$ the ``commutator term" $\left(\left(\boa\times\bob\right)\cdot\bor\right)^2$ in the uncertainty relation \eqref{eq:S-eq} is zero for all of these vectors. Hence the lower uncertainty bound (which is always assumed on unit vectors, so that the above corollary remains applicable) is a feature purely of the anti-commutator term. 
This term is analogous in form to the classical covariance; however, in the quantum context, this interpretation only applies where the measurements are compatible and thus have physical joint probabilities.

\section{Qutrit uncertainty}
\label{sec:qutrit}
\subsection{Extended qubit observables}
\label{sec:qubit-extended-uncertainty}
A natural continuation of the qubit example is provided by extending the general, sharp, $\pm 1$-valued qubit observables $\boa\cdot\bosig$ and $\bob\cdot\bosig$ into a third dimension
\begin{alignat}{3}
  \label{eqn:extended-qubit-obs-defn}
  \opa &= &(\boa\cdot\bosig) \oplus 0 = &\begin{pmatrix}\boa\cdot\bosig & 0\\ 0 &0\\\end{pmatrix}\\
  \opb &= &(\bob\cdot\bosig) \oplus 0 = &\begin{pmatrix}\bob\cdot\bosig & 0\\ 0 &0\\\end{pmatrix},
\end{alignat}
where $\boa$ and $\bob$ are normalised, and $\bosig$ is the usual vector of qubit Pauli matrices. It is easily verified that given any qutrit density matrix we can attain the same variance pairs $\var{\opa}, \var{\opb}$ with a density matrix of the form
\begin{align}
  \rho =\frac{w}{2}\left(\opi_2 + \bor\cdot\bosig\right) \oplus (1-w)=\begin{pmatrix}\frac{w}{2}\left(\opi_2 + \bor\cdot\bosig\right) & 0 \\ 0& (1-w)\end{pmatrix},
\end{align}
where $\frac{1}{2}\left(\opi_2 + \bor\cdot\bosig\right)$ is a qubit density matrix, and $w$ is a real parameter between $0$ and $1$ (inclusive). We can compute the variances of $\opa$ and $\opb$ for a state of this form directly from the definition
\begin{align}
  \varr{\opa} &= w - w^2 (\boa\cdot\bor)^2\\
  \varr{\opb} &= w - w^2 (\bob\cdot\bor)^2.
\end{align}
Unfortunately an analytical description of the uncertainty region does not seem to be forthcoming for the case of general $\boa$ and $\bob$, although numerical approximations to the boundary curve may readily be computed. We therefore focus our attention on the case $\boa\cdot\bob = 0$. We note that projecting a vector onto the plane spanned by $\boa$ and $\bob$ leaves both of variances unchanged so, without loss of generality, set 
\begin{align}
  \bor = r_a \boa + r_b \bob,
\end{align}
subject to
\begin{align}
  r_a^2 + r_b^2 \leq 1.
\end{align}
At a fixed $w$ the minimum for $\varr{\opb}$ will be attained by making $(\bob\cdot\bor)^2$ as large as possible; we therefore set $r_b^2 = 1-r_a^2$. We also see that for $X\in [0,1]$ the equation $X = \varr{\opa}$ enforces a relation between $w$ and $r_a^2$: 
\begin{align}
  w_\pm = \frac{1\pm\sqrt{1 -4 X r_a^2}}{2r_a^2}.
\end{align}
Since $w$ is required to be real for $\rho$ to be a valid state, we need $r_a^2 \leq \frac{1}{4X}$; in addition $w$ must be in the range $[0,1]$. Note that  $w_+\ge w_-\ge 0$, so that $w_-$ leads to a valid state whenever $w_+$ does. Now, $w_+\le 1$ is equivalent to having both $r_a^2\ge\frac12$ and $r_a^2\ge 1-X$. Denoting
\begin{align}
  Y_\pm = w_\pm - w_\pm^2 (1-r_a^2),
\end{align}
we have that
\begin{align}
  Y_+ - Y_- =\bigl(2r_a^2-1\bigr)\frac{\sqrt{1-4 X r_a^2}}{r_a^4}.
\end{align}
Hence, wherever $w_+$ leads to a valid quantum state, $w_-$ gives a lower $\varr{\opb}$, and so we can focus on $w_-, Y_-$.
The requirement  $w_-\le 1$ is satisfied if and only if $r_a^2 \leq 1-X$ whenever $r_a^2 < \frac{1}{2}$. 
We now note that $w_-(r_a^2)$  always gives a valid solution when $r_a^2 = 0$,
\begin{align}
  w_-(0) &= \lim_{r_a^2 \to 0}  \frac{1 - \sqrt{1 -4 X r_a^2}}{2r_a^2} = X,\\
  Y_-(0) &= X(1-X),
\end{align}
It is easily verified that $w_-(r_a^2)\equiv w_-(u)>X$ whenever $u=r_a^2>0$; this entails that the derivative 
\begin{equation}\label{eq:w_->X}
w_-'(u)=\frac{w_-(u)-X}{u\sqrt{1-4Xu}}>0\quad\text{for }u>0.
\end{equation}
We then differentiate $Y_-(u)$
\begin{align}
  Y_-^\prime(u) = \frac{2(w_--X)(1-w_-)}{u\sqrt{1-4Xu}}\ge 0,
\end{align}
so that $Y_-(r_a^2) - Y_-(0)\ge 0$ always.
Hence we  take $r_a^2 = 0$ to find the minimum $\varmin{\opb}=X(1-X)$ at a fixed $\var{\opa}=X$. The lower boundary of the uncertainty region is therefore given by the curve $\sdev{\opb} = \sdev{\opa}\sqrt{1-\var{\opa}}$. Since the region is symmetric under reflection on the axis $\sdev\opa=\sdev\opb$ and in the present case must contain the uncertainty region for orthogonal qubit observables, it is given by the set
\begin{align}
  {\rm PUR}_\Delta(\opa,\opb) = \left\{(\sdev{\opa},\sdev{\opb})\in [0,1]\times[0,1]\, \middle|\, \sdev{\opb} \geq \sdev{\opa}\sqrt{1-\var{\opa}} \text{ and } \sdev{\opa} \geq \sdev{\opb}\sqrt{1-\var{\opb}}  \right\},
\end{align}
shown in Figure \ref{fig:extended-qubit-uncertainty-region}.

\begin{center}
  \begin{figure}[ht]
    \includegraphics[width=.5\textwidth]{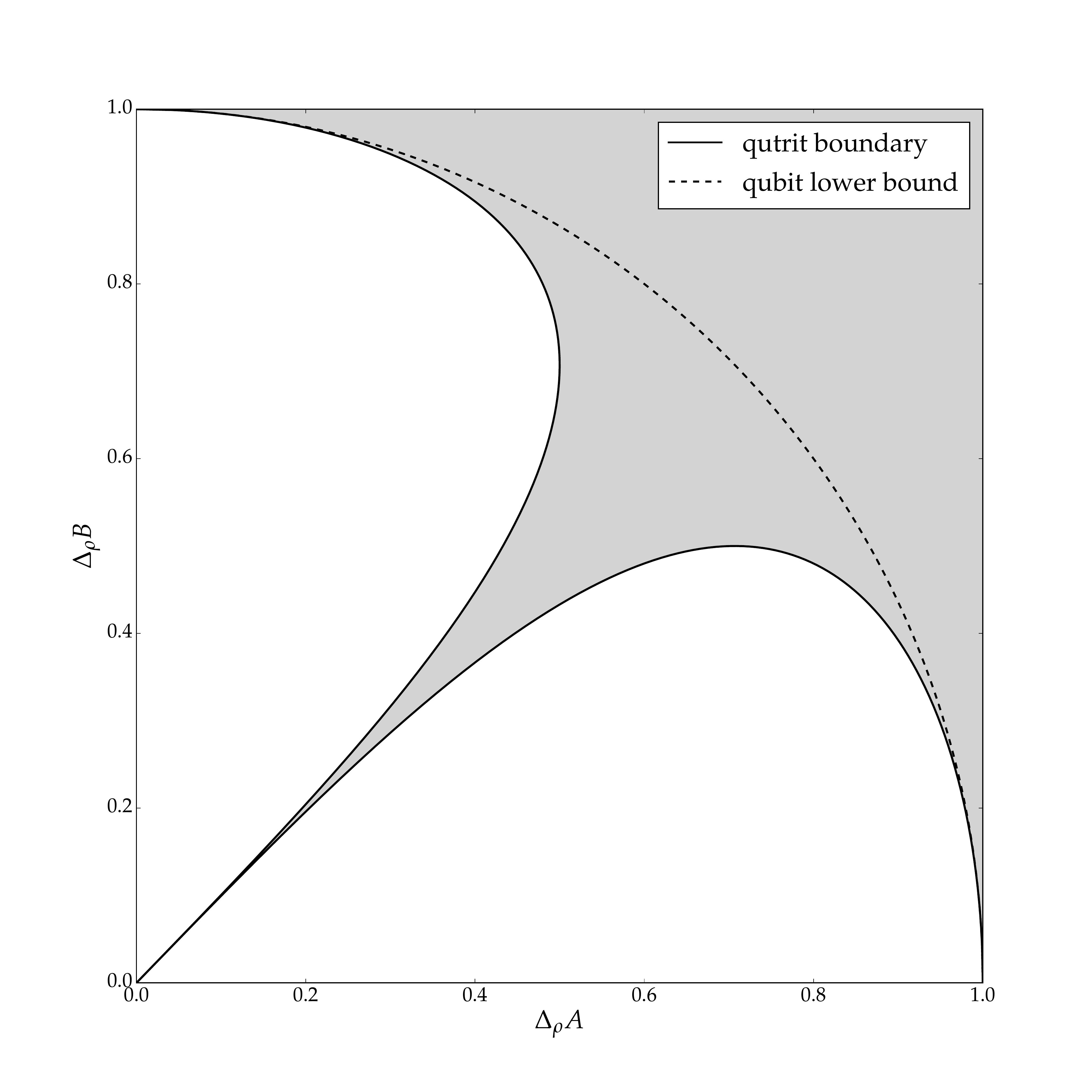}
    \caption{The uncertainty region for the qutrit observables defined in equation \eqref{eqn:extended-qubit-obs-defn}. The dashed line indicates the lower boundary of the set of standard deviation pairs achievable by states of the form $\rho_2 \oplus 0$, where $\rho_2$ is a qubit density matrix. The points $(0,1)$, $(0,0)$ and $(1,0)$ are attained by the states $\frac{1}{2}(\opi_2+\bob\cdot\bosig)\oplus 0$, $0\oplus 1$, and $\frac{1}{2}(\opi_2+\boa\cdot\bosig)\oplus 0$ respectively.}
    \label{fig:extended-qubit-uncertainty-region}
  \end{figure}
\end{center}

\subsection{``Gell-Mann'' observables}
An interesting counterpoint to section \ref{sec:qubit-uncertainty} is provided by the case of quantum observables on a three dimensional Hilbert space. Here it is possible to show, by counterexample, that the Schr\"odinger uncertainty relation is not sufficient to define the exact uncertainty region. We expect that the same will hold true for all finite dimensions greater than two.
For our counterexample we choose the observables to be two of the Gell-Mann matrices, and let $\rho$ be an arbitrary, Hermitian, positive-semi-definite three by three matrix of trace 1.
\begin{equation}
  \label{eq:qutrit-observables-def}
  \opa = \begin{pmatrix}
    1 & 0 & 0\\
    0 & -1 & 0\\
    0 & 0 & 0\\
  \end{pmatrix} \quad
  \opb = \begin{pmatrix}
    0 & 0 & 1\\
    0 & 0 & 0\\
    1 & 0 & 0\\
  \end{pmatrix} \quad
  \rho =\begin{pmatrix}
    \rho_{11} & \rho_{12} & \rho_{13}\\
    \rho_{12}^* & \rho_{22} & \rho_{23}\\
    \rho_{13}^* & \rho_{23}^* & \rho_{33}
  \end{pmatrix}
\end{equation}
Then
\begin{equation}
  \begin{aligned}
    \opa^2 &= \begin{pmatrix}
      1 & 0 & 0\\
      0 & 1 & 0\\
      0 & 0 & 0\\
    \end{pmatrix}\qquad
    &\com{\opa}{\opb} = \begin{pmatrix}
      0 & 0 & 1\\
      0 & 0 & 0\\
      -1 & 0 & 0\\
    \end{pmatrix}\\
    \opb^2 &= \begin{pmatrix}
      1 & 0 & 0\\
      0 & 0 & 0\\
      0 & 0 & 1\\
    \end{pmatrix}
    &\acom{\opa}{\opb} = \begin{pmatrix}
      0 & 0 & 1\\
      0 & 0 & 0\\
      1 & 0 & 0\\
    \end{pmatrix}
  \end{aligned},
\end{equation}
\begin{align}
  \expr{\opa} &= \rho_{11} - \rho_{22}\\
  \expr{\opb} &= \rho_{13} + \rho_{13}^* = 2\operatorname{Re}{\rho_{13} }\\
  \expr{\opa^2} &= \rho_{11} + \rho_{22}\\
  \expr{\opb^2} &= \rho_{11} + \rho_{33} = 1 - \rho_{22}\\
  \expr{\com{\opa}{\opb}} &= \rho_{13} - \rho_{13}^* = 2\operatorname{Im}{\rho_{13} }\\
  \expr{\acom{\opa}{\opb}} &= \rho_{13} + \rho_{13}^* = \expr{\opb}\\
  \varr{\opa} &= \rho_{11} + \rho_{22} - (\rho_{11} - \rho_{22})^2\\
  \varr{\opb} &= \rho_{11} + \rho_{33} - 4(\operatorname{Re}{\rho_{13}})^2.
\end{align}
We can set $\rho_{12}$ and $\rho_{23}$ equal to zero without changing the uncertainties or the Schr\"odinger relation at all. Note that the new matrix we obtain by this procedure is positive semi-definite and trace $1$ if the original was. We can, therefore, explore the entire uncertainty region using states of the form
\begin{equation} \label{eq:qutrit-specialised}
  \rho = \begin{pmatrix}
    \rho_{11} & 0 & \rho_{13}\\
    0 & \rho_{22} & 0\\
    \rho_{13}^* & 0 & \rho_{33}
  \end{pmatrix}.
\end{equation}

By elementary methods (differentiating, finding local extrema and comparing them) we can find the minimum and maximum values of $\varr{\opb}$ as a function of $\varr{\opa}$. Because of the way the various constraints change with $\varr{\opa}$ the functional form of the minima and maxima also change. In all there are ten distinct bounding curve segments, given in equation \eqref{eqn:gellmann-bounding-curves} and shown in Figure \ref{fig:qutrit-uncertainty-region}. We give a derivation of these curves in Appendix \ref{app:gellmann-ur}.
\begin{center}
  \begin{figure*}[ht]
    \includegraphics[width=.8\textwidth]{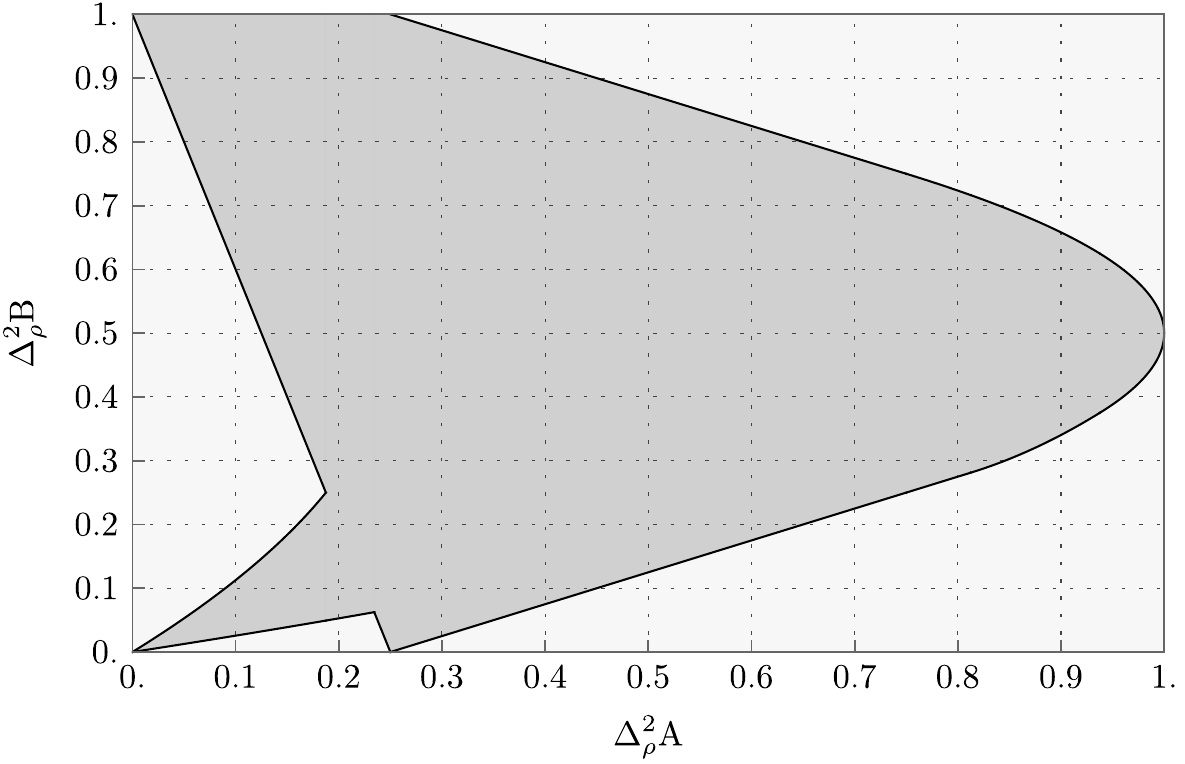}
    \caption{The uncertainty region for the qutrit observables defined in equation \eqref{eq:qutrit-observables-def}. The region contained in the solid curves is the allowed uncertainty region.}
    \label{fig:qutrit-uncertainty-region}
  \end{figure*}
\end{center}
Similar to the qubit case, the uncertainty region contains nontrivial upper bounds, and it is not of a simple convex shape; however, there are fundamental differences. The region shown in Fig.~\ref{fig:qutrit-uncertainty-region} does touch and include the origin (0,0), reflecting the fact that the two observables have a common eigenstate. The shape of the region is also quite asymmetrical; in particular, it is not possible for both uncertainties to get large simultaneously. It is possible that these features can be connected to trade-off relations involving other observables, as we indicated in the qubit case. However, this may require the acquisition of a host of further case studies.
The Schr\"odinger relation does not entail the lower bound of the uncertainty region in this case. 
We show this by determining the maximum value in the interval of possible values of the Schr\"odinger bound,
$\{S(\opa,\opb,\rho)\,|\, \rho\in\si(\sdev{\opa})\}$, and we find indeed that for some range of values of $\sdev{\opa}$,
\begin{equation}\label{eq:sch-uncertainty-region}
  \varmin{\opb} > \max\left\{\frac{1}{4\varr{\opa}} \left(\bigl| \expr{\com{\opa}{\opb}} \bigr|^2 
      + 
      \bigl(\expr{\acom{\opa}{\opb}}-2\expr{\opa}\expr{\opb}\bigr)^2\right)\right\}=\max\left\{\frac{S(\opa,\opb,\rho)}{\var{\opa}}\right\}.
\end{equation}
To verify this we first solve the equation
\begin{align}
  x &= \varr{\opa}\\
    &= 1 - \rho_{33} - \left(2\rho_{11} + \rho_{33} - 1\right)^2\\
  \implies \rho_{33}^\pm &= \frac{1}{2}\left(1 - 4\rho_{11} \pm\sqrt{1+8\rho_{11}-4x}\right).
\end{align}
We then note that in the range $x\in\left[\frac{3}{4} ,1\right]$ only the $\rho_{33}^+$ solution with $\rho_{11} \in \left[\frac{1}{2} - \frac{\sqrt{1-x}}{2},\frac{1}{2} + \frac{\sqrt{1-x}}{2}\right]=:\ival$ leads to $\rho$ being a valid state (positive and trace $1$). We therefore seek
\begin{align}
  f(x) &\coloneqq \frac{1}{x}\max\left\{\Im(\rho_{13})^2 +\Re(\rho_{13})^2\left(6-8\rho_{11}-4\rho_{33}^+\right)^2\middle|\, \rho_{11}\in \ival,\,\left\lvert\rho_{13}\right\rvert^2 \leq \rho_{11}\rho_{33}^+\right\}\\
       &= \frac{1}{x}\max\left\{\Im(\rho_{13})^2 +\Re(\rho_{13})^2\left(4-2\sqrt{1+8\rho_{11}-4x}\right)^2\middle|\, \rho_{11}\in\ival,\,\left\lvert\rho_{13}\right\rvert^2 \leq \rho_{11}\rho_{33}^+\right\}\\
       &= \frac{1}{x}\max\left\{\left(\lambda +(1-\lambda)\left(4-2\sqrt{1+8\rho_{11}-4x}\right)^2\right)\frac{\rho_{11}}{2}\left(1-4\rho_{11}+\sqrt{1+8\rho_{11}-4x}\right)\middle|\, \rho_{11}\in\ival,\,\lambda\in[0,1]\right\}.
\end{align}
For ease of exposition we here restrict our attention to $x = 1$, in which case only $\rho_{11} = \frac{1}{2}$ leads to a valid quantum state. We can therefore directly compute $f(1) = 0$, and note that as the function is continuous there is an interval where the Schr\"odinger inequality is too weak to completely describe the uncertainty region.  

\section{Conclusion} \label{sec:conclusion}
In this paper we have introduced the notion of the uncertainty region for a pair (or a finite collection) of quantum observables, and provided a range of examples illustrating the concept. In contrast to the well-known uncertainty relations, we observed that an uncertainty region is most appropriately described by a {\em state-independent} form of relation that describes, in particular, its boundary.

We have given a geometrical derivation of the exact uncertainty region for an  arbitrary pair of $\pm1$-valued qubit observables, in the explicit form of a state independent uncertainty relation.  When the observables $\opa,\opb$ have non-orthogonal Bloch vectors $\boa,\bob$, we found non-trivial upper bounds for the variance $\varr{\opb}$ as a function of $\varr{\opa}$, and showed that this may be understood in terms of the uncertainty trade-off between $\opa$ and another observable $\opb'$ (whose Bloch vector $\bob'$ is in the plane of $\boa,\bob$ and perpendicular to $\bob$): the observables $\opb,\opb'$ obey the uncertainty relation $\var{\opb}+\var{\opb'}\ge 1$, and then the minimum value of $\sdev{\opb'}$ given $\sdev{\opa}$ dictates the maximum value of $\sdev{\opb}$.

We have seen that the Schr\"odinger inequality determines the uncertainty region in the qubit case, despite the fact that it is only a state-independent inequality in the case where $\boa\perp\bob$. This is essentially due to the fact that satisfaction of this inequality is equivalent to the positivity condition for states.

Finally we described the uncertainty region for two pairs of qutrit observables, which provide illustrations of the often non-trivial shape of an uncertainty region. The pairs of observables studied here do have a common eigenstate and consequently the uncertainty region is allowed to touch and include the point (0,0). The last example also demonstrates the fact that the Schr\"odinger relation cannot, in general, determine the lower boundary (and certainly not the upper boundary) of the uncertainty region in dimensions higher than two.

The examples studied here reinforce the qualitative understanding of  the uncertainty principle  as the statement that the incompatibility (non-commutativity) of a pair of observables generally enforces a state-independent lower bound to their uncertainty region. Where incompatible observables do have joint eigenstates, allowing the uncertainty region to include the origin, one must still expect that parts of some neighbourhood of (0,0) will remain excluded from the uncertainty region. 

The general theory of the structure of uncertainty regions is still unknown. It seems likely that an expanding library of case studies, like those described above, will help point the way for future investigations of this theory. A notable feature of these investigations is how rapidly the computations become more difficult as the Hilbert space dimension increases, for example attempting to generalise the results of sections \ref{sec:qubit-uncertainty} and \ref{sec:qubit-extended-uncertainty} to the case of extended qubit observables with non-orthogonal Bloch vectors requires computing the roots of fifth order polynomials. One avenue for further investigation could be the use of numerical methods in the analysis; since the variance is quadratic in the state the problem may be reduced to polynomial root finding, which may be efficiently solved using well known numerical techniques.
To conclude, we expect that much can be learned about the uncertainty principle through the study of uncertainty regions, and hope our investigation will encourage some readers to undertake further case studies.

\section*{Acknowledgements}

We would like to thank Stefan Weigert and Roger Colbeck for helpful and stimulating discussions, as well as for feedback on previous drafts. Thomas Bullock kindly provided useful comments on the first arXiv version. Oliver Reardon-Smith gratefully acknowledges the support of the Engineering and Physical Sciences Research Council, as well as that of the University of York Department of Mathematics.

\section*{Author contributions}

Both authors contributed extensively to the work. Sadly, Paul Busch passed away in June 2018, when this manuscript was almost finalised. Since then it has been completed by Oliver Reardon-Smith under the supervision of Roger Colbeck.

\appendix

\section{Uncertainty region for Gell-Mann observables}
\label{app:gellmann-ur}
Given
\begin{align}
  \opa = \begin{pmatrix}
    1 & 0 & 0\\
    0 & -1 & 0\\
    0 & 0 & 0\\
  \end{pmatrix} \quad
  \opb = \begin{pmatrix}
    0 & 0 & 1\\
    0 & 0 & 0\\
    1 & 0 & 0\\
  \end{pmatrix} \quad
  \rho =\begin{pmatrix}
    \rho_{11} & 0 & \rho_{13}\\
    0 & 1-\rho_{11} - \rho_{33} & 0\\
    \rho_{13}^* & 0 & \rho_{33}
  \end{pmatrix},
\end{align}
we can solve
\begin{align}
  x &= \varr{\opa}\\
    &= 1 - \rho_{33} - \left(2\rho_{11} + \rho_{33} - 1\right)^2,
\end{align}
giving
\begin{align}
  \rho_{33}^\pm = \frac{1}{2}\left(1-4\rho_{11} \pm\sqrt{1+8\rho_{11}-4x}\right).
\end{align}
The positivity of $\rho$ constrains the choice of $\rho_{11}$ values in each case. If
\begin{align}
  \rho^\pm = \begin{pmatrix}
    \rho_{11} & 0 & \rho_{13}\\
    0 & 1-\rho_{11} - \rho_{33}^\pm & 0\\
    \rho_{13}^* & 0 & \rho_{33}^\pm
  \end{pmatrix}
\end{align}
and $0\leq\rho_{13}\leq\sqrt{\rho_{11}\rho_{33}}$ then
\begin{align}
  \label{eqn:rho-plus-constraints}
  \rho^+ \geq 0 &\iff \begin{cases}\left[ 0\leq \rho_{11}\leq\frac{1}{2}(1-\sqrt{1-4x})\right] \text{ or } \left[\frac{1}{2}(1+\sqrt{1-4x})\leq\rho_{11}\leq\frac{1}{2}(1+\sqrt{1-x})\right],  &0\leq x\leq\frac{1}{4}\\
    \frac{1}{8}\left(4x-1\right)\leq \rho_{11}\leq \frac{1}{2}\left(1+\sqrt{1-x}\right), & \frac{1}{4}\leq x\leq \frac{3}{4}\\
    \frac{1}{2}\left(1-\sqrt{1-x}\right)\leq \rho_{11} \leq\frac{1}{2}(1+\sqrt{1-x}, & \frac{3}{4}\leq x\leq 1
  \end{cases}\\ 
  \rho^- \geq 0 &\iff \begin{cases}0\leq\rho_{11}\leq\frac{1}{2}(1-\sqrt{1-x}), &0\leq x\leq\frac{1}{4}\\
    \frac{1}{8}\left(4x-1\right)\leq\rho_{11}\leq\frac{1}{2}(1-\sqrt{1-x}), & \frac{1}{4}\leq x\leq \frac{3}{4}\\
    \text{no valid solution}, & \frac{3}{4}\leq x\leq 1.
  \end{cases}\label{eqn:rho-minus-constraints}
\end{align} 
The constraints on $\rho_{13}$ imply that $0\leq\left(\Re{\rho_{13}}\right)^2\leq\rho_{11}\rho_{33}^\pm$. Obviously $\var[\rho^\pm]{\opb}$ will be minimised by a $\rho^\pm$ with $\left(\Re{\rho_{13}}\right)^2 = \rho_{11}\rho_{33}^\pm$ and maximised when $\left(\Re{\rho_{13}}\right)^2 = 0$.
\begin{align}
  \var[\rho^\pm]{\opb} = \rho_{11} +\rho_{33}^\pm - 4\lambda\rho_{11}\rho_{33}^\pm.
\end{align}
For a fixed $x$ the local minima and maxima will either be where the inequalities above are saturated or where the derivative of $\var[\rho^\pm]{\opa}$ with respect to $\rho_{11}$ (considering $\rho_{33}^\pm$ as a function of $\rho_{11}$) is zero. 

\subsection{Exploring minima}
Here we consider the case $\left(\Re{\rho_{13}}\right)^2 = \rho_{11}\rho_{33}^\pm$. In this case
\begin{align}
  \var[\rho^\pm]{\opb} &= \rho_{11} + \rho_{33}^\pm -4\rho_{11}\rho_{33}^{\pm}\\
                       &= \frac{1}{2}\left(1 -6\rho_{11} + 16\rho_{11}^2 \pm (1-4\rho_{11})\sqrt{1+8\rho_{11}-4x}\right)\\
  \diff{\left(\var[\rho^\pm]{\opb}\right)}{\rho_{11}} &= -3 + 16\rho_{11} \mp 2\sqrt{1+8\rho_{11}-4x}\pm \frac{2-8\rho_{11}}{\sqrt{1+8\rho_{11} -4x}}\\
  \diff{\left(\var[\rho^\pm]{\opb}\right)}{\rho_{11}} &= 0 \iff (3 - 16\rho_{11})\sqrt{1+8\rho_{11} -4x} =\pm \left(8x-24\rho_{11}\right).
                                                        \label{eqn:deriv-y-rhomin-is-zero}
\end{align}
The solutions to this equation obey a cubic equation
\begin{align}
  (3 - 16\rho_{11})^2(1+8\rho_{11} -4x) &=\left(8x-24\rho_{11}\right)^2\\
  0&= (32 \rho_{11}-16 x+3) (8 \rho_{11} (8 \rho_{11}-5)+4 x+3),
\end{align}
with solutions
\begin{align}
  \rho_{11}^\pm &= \frac{1}{16} \left(5\pm\sqrt{13-16 x}\right)\\
  \rho_{11}^0 &= \frac{1}{32} \left(16 x-3 \right).
\end{align}
Substituting these back into \eqref{eqn:deriv-y-rhomin-is-zero} we see that $\rho_{11}^0$ and $\rho_{11}^+$ are solutions wherever they give valid quantum states, but $\rho_{11}^-$ is only a solution if $x=\frac{9}{16}$ or $\frac{3}{4}\leq x$. Comparing the solutions with the restrictions \eqref{eqn:rho-plus-constraints} we get the following solutions for $\rho^+$, and no solutions for $\rho^-$
\begin{subequations}
\label{eqn:rho-plus-zero-deriv-solns}
\begin{align}
  \rho_{11} = \frac{1}{32} \left(16 x-3 \right) &\text{ on } x\in\left[\frac{3}{16}, \frac{15}{16}\right]\\
  \rho_{11} = \frac{1}{16} \left(5+\sqrt{13 - 16x} \right) &\text{ on } x\in\left[\frac{9}{100},\frac{13}{16}\right]\\
  \rho_{11} = \frac{1}{16} \left(5-\sqrt{13 - 16x} \right) &\text{ on } x\in \left\{\frac{9}{16}\right\}\cup\left[\frac{3}{4},\frac{13}{16}\right],
\end{align}
\end{subequations}
note that the apparently exceptional point $x = \frac{9}{16}$, $\rho_{11} = \frac{3}{16}$ lies on the line $\rho_{11} = \frac{1}{32} \left(16 x-3 \right)$. To these we add the boundry values
\begin{subequations}
\label{eqn:rho11-con}
\begin{align}
  \label{eqn:rho-plus-con-1}\rho_{11} = 0 &\text{ with } \rho_{33}^+ \text{ and } x\in\left[0,\frac{1}{4}\right]\\
  \label{eqn:rho-plus-con-2}\rho_{11} = \frac{1}{2}\left(1-\sqrt{1-4x}\right) &\text{ with } \rho_{33}^+ \text{ and } x\in\left[0,\frac{1}{4}\right]\\
  \label{eqn:rho-plus-con-3}\rho_{11} = \frac{1}{2}\left(1+\sqrt{1-4x}\right) &\text{ with } \rho_{33}^+ \text{ and } x\in\left[0,\frac{1}{4}\right]\\
  \label{eqn:rho-plus-con-4}\rho_{11} = \frac{1}{2}\left(1+\sqrt{1-x}\right) &\text{ with } \rho_{33}^+ \text{ and } x\in\left[0,1\right]\\
  \label{eqn:rho-plus-con-5}\rho_{11} = \frac{1}{8}\left(4x-1\right) &\text{ with } \rho_{33}^+ \text{ and } x\in\left[\frac{1}{4},\frac{3}{4}\right]\\  
  \label{eqn:rho-plus-con-6}\rho_{11} = \frac{1}{2}\left(1-\sqrt{1-x}\right) &\text{ with } \rho_{33}^+ \text{ and } x\in\left[\frac{3}{4},1\right]\\
  \label{eqn:rho-minus-con-1}\rho_{11} = 0  &\text{ with } \rho_{33}^- \text{ and } x\in\left[0,\frac{1}{4}\right]\\
  \label{eqn:rho-minus-con-2}\rho_{11} = \frac{1}{8}(4x-1)  &\text{ with } \rho_{33}^- \text{ and } x\in\left[\frac{1}{4},\frac{3}{4}\right]\\
  \label{eqn:rho-minus-con-3}\rho_{11}= \frac{1}{2}(1-\sqrt{1-x}) &\text{ with } \rho_{33}^- \text{ and } x\in\left[0,\frac{3}{4}\right]\
\end{align}
\end{subequations}
the (locally) extremising values of $\rho_{11}$ are summarised in Figure \ref{fig:rho-plus-vals}. The values of $\var[\rho_{11}^+]{\opb}$ given these choices of $\rho_{11}$, and $\left(\Re{\rho_{13}}\right)^2 = \rho_{11}\rho_{33}^\pm$ are plotted in Figure \ref{fig:rho-minus-vals}.
\begin{figure}[ht]
  \begin{subfigure}[t]{0.4\textwidth}
    \includegraphics[height=0.65\linewidth]{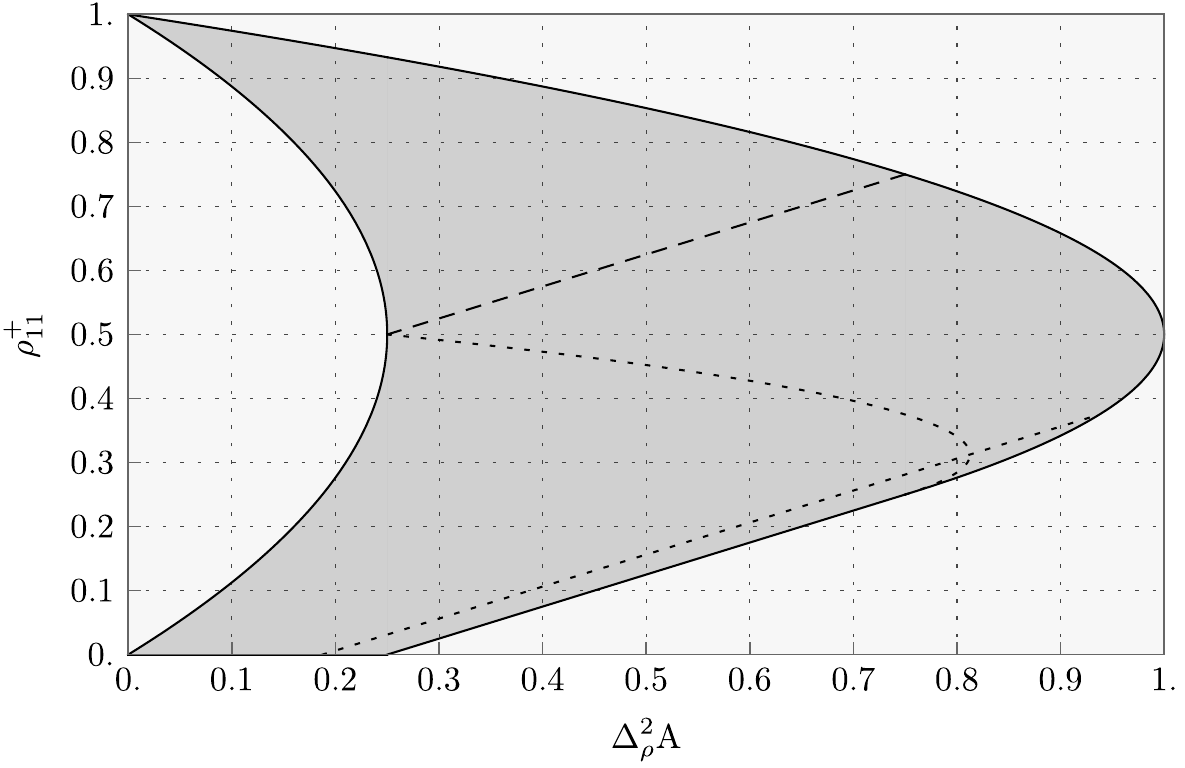}
    \caption{Range of $\rho_{11}$ given $\rho_{33} = \rho_{33}^+$. The dotted lines are the where $\Re{\rho_{13}}$ is maximal and the derivative of $\var[\rho^+]{\opb}$ with respect to $\rho_{11}$ is zero \eqref{eqn:rho-plus-zero-deriv-solns}, the dashed line is the where $\Re{\rho_{13}} = 0$ and the derivative of $\var[\rho^+]{\opb}$ with respect to $\rho_{11}$ is zero \eqref{eqn:rho-plus-max-zero-deriv-solns}.}\label{fig:rho-plus-vals}
  \end{subfigure}\quad
  \begin{subfigure}[t]{0.4\textwidth}
    \includegraphics[height=0.65\linewidth]{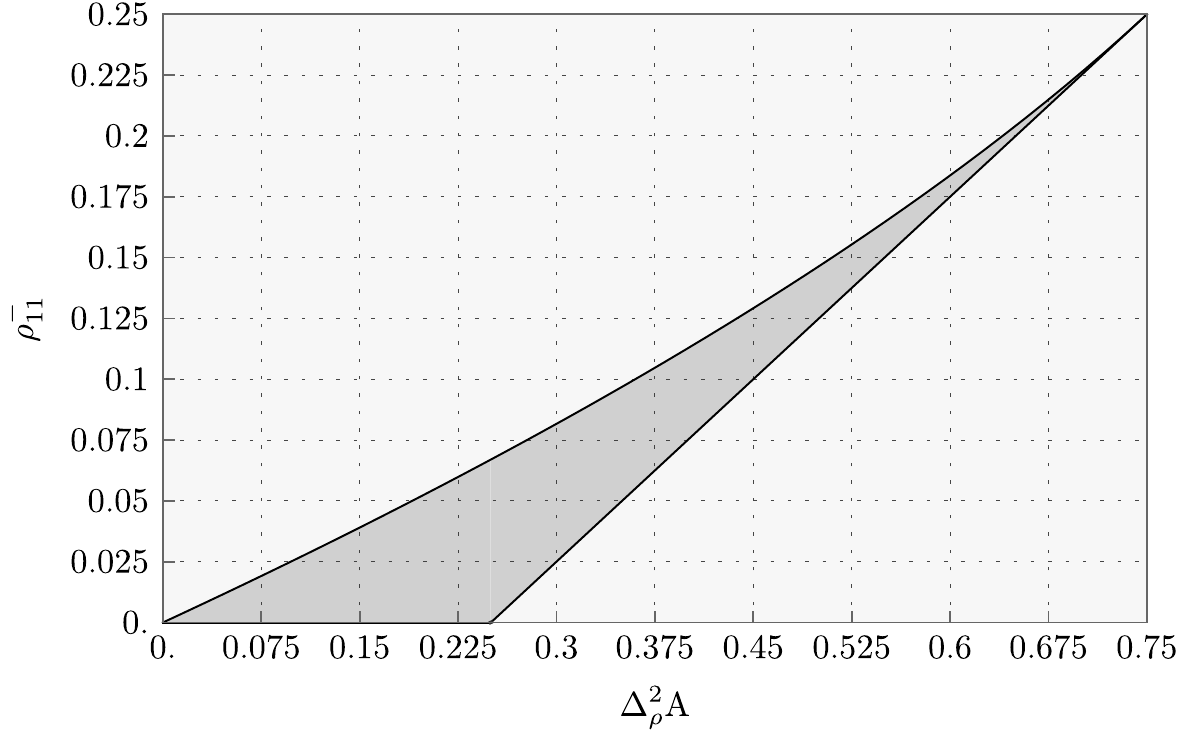}
    \caption{Range of $\rho_{11}$ given $\rho_{33} = \rho_{33}^-$. There are no local extrema other than the boundary curves.}\label{fig:rho-minus-vals}
      \end{subfigure}
      \caption{The filled region indicates the allowed values of $\rho_{11}$ as a function of $\var[\rho^+]{\opa}$ in each case. The solid lines are the boundary curves, given in  {\eqref{eqn:rho11-con}}}
\end{figure}

\subsection{Exploring maxima}
Here we consider the case $\left(\Re{\rho_{13}}\right)^2 = 0$. In this case
\begin{align}
  \var[\rho^\pm]{\opb} &= \rho_{11} + \rho_{33}^\pm\\
                       &= \frac{1}{2} \left(1-2 \rho_{11}\pm\sqrt{8 \rho_{11}-4 x+1}\right)\\
  \diff{\left(\var[\rho^\pm]{\opb}\right)}{\rho_{11}} &= -1 \pm \frac{2}{\sqrt{8\rho_{11} -4x+1}}\\
  \diff{\left(\var[\rho^\pm]{\opb}\right)}{\rho_{11}} &= 0 \iff \sqrt{8\rho_{11} -4x+1} = \pm 2.
                                                        \label{eqn:deriv-y-rhomax-is-zero}
\end{align}
There are no solutions for $\rho_{33}^-$, but $\rho_{33}^+$ has the solution
\begin{align}
  \label{eqn:rho-plus-max-zero-deriv-solns}
  \rho_{11} = \frac{1}{8}\left(3+4x\right),
\end{align}
which is always a valid solution for $\rho^+$ and never valid for $\rho^-$. To this we add the boundary values which are the same as those with $\left(\Re{\rho_{13}}\right)^2 = \rho_{11}\rho_{33}$, given in \eqref{eqn:rho11-con}.

\subsection{The bounding curves}
Comparing the local extrema we can now describe the full uncertainty region shown in figure \ref{fig:qutrit-uncertainty-region}
\begin{alignat}{2}
  & \var{\opb} = 1, \quad && \var{\opa}\in\left[0,\frac{1}{4}\right]\\
  & \var{\opb} = \frac{1}{8} \left(9-4\,\var{\opa}\right), \quad && \var{\opa}\in\left[\frac{1}{4},\frac{3}{4}\right]\\
  & \var{\opb} = \frac{1}{2} \left(1+\sqrt{1-\var{\opa}}\right), \quad && \var{\opa}\in\left[\frac{3}{4}, 1\right]\\
  & \var{\opb} = \frac{1}{2} \left(1-\sqrt{1-\var{\opa}}\right), \quad && \var{\opa}\in\left[\frac{15}{16}, 1\right]\\ 
  & \var{\opb} =  2 \left(\var{\opa}\right)^2-\frac{11}{4}\,\var{\opa}+\frac{153}{128} \quad && \var{\opa}\in\left[\frac{13}{16},\frac{15}{16}\right]\\
  & \var{\opb} = \frac{1}{8} \left(4\,\var{\opa}-1\right), \quad && \var{\opa}\in\left[\frac{1}{4}, \frac{13}{16}\right]\\
  & \var{\opb} = 1-\var{\opa}, \quad && \var{\opa}\in\left[\frac{15}{64},\frac{1}{4}\right]\\
  & \var{\opb} = \frac{1}{2}\left(1-\sqrt{1-\var{\opa}}\right) \quad && \var{\opa}\in\left[0,\frac{15}{64}\right]\\
  & \var{\opb} = \frac{1}{2} \left(1-\sqrt{1-4\,\var{\opa}}\right)\quad  && \var{\opa}\in\left[0,\frac{3}{16}\right]\\  
  & \var{\opb} = 1-4\,\var{\opa}, \quad && \var{\opa}\in\left[0,\frac{3}{16}\right].
  \label{eqn:gellmann-bounding-curves}
\end{alignat}

\end{document}